\newtheorem{lemma}{Lemma}
\newtheorem{proposition}{Proposition}
\newtheorem{remark}{Remark}
\newtheorem{corollary}{Corollary}
\begin{document}
%
\title{Beamfocusing Capabilities of a Uniform Linear Array in the Holographic Regime}
%
%
%

\author{Xavier~Mestre,~\IEEEmembership{Senior Member,~IEEE,}
        Adrian~Agustin,~\IEEEmembership{Senior Member,~IEEE}
\thanks{This work is supported by the grant from the Spanish ministry of economic affairs and digital transformation and of the European union - NextGenerationEU UNICO-5G I+D/AROMA3D-Earth and AROMA3D-Hybrid (TSI-063000-2021-69 and TSI-063000-2021-71), by Grant 2021 SGR 00772 funded by the Universities and Research Department from Generalitat de Catalunya, and by the Spanish Government through the project 6G AI-native Air Interface (6G-AINA, PID2021-128373OB-I00 funded by MCIN/AEI/10.13039/501100011033) ERDF A way of making Europe. }
}


%
%

\markboth{Submitted to IEEE Transactions on Signal Processing}{Submitted to IEEE Transactions on Signal Processing}
%



\maketitle

\begin{abstract}
The use of multiantenna technologies in the near field offers the possibility of focusing the energy in spatial regions rather than just in angle. The objective of this paper is to provide a formal framework that allows to establish the region in space where this effect can take place and how efficient this focusing can be, assuming that the transmit architecture is a uniform linear array (ULA). A dyadic Green's channel model is adopted, and the amplitude differences between the receiver and each transmit antenna are effectively incorporated in the model. By considering a second-order expansion of the signal to noise ratio (SNR) around the intended receiver, a formal criterion is derived in order to establish whether beamfocusing is feasible or not. For the regions where beamfocusing is indeed possible, an analytic description is provided that determines the shape and position of the asymptotic ellipsoid where a minimum SNR is achieved. Further insights are provided by considering the holographic regime, whereby the number of elements of the ULA increase without bound while the distance between adjacent elements converges to zero. This asymptotic framework allows to simplify the analytical form of the beamfocusing feasibility region, which in turn provides some further insights into the shape of the coverage regions depending on the position of the intended receiver. In particular, it is shown that beamfocusing is only possible if the size of the ULA is at least $4.4\lambda$ where $\lambda$ is the transmission wavelength. Furthermore, a closed form analytical expression is provided that asymptotically determines the maximum distance where beamfocusing is feasible as a function of the elevation angle. In particular, beamfocusing is only feasible when the receiver is located between a minimum and a maximum distance from the array, where these upper and lower distance limits effectively depend on the angle of elevation. 
\end{abstract}

\begin{IEEEkeywords}
Holographic arrays, near-field communications, beamfocusing, extra large antenna arrays.
\end{IEEEkeywords}

%
\IEEEpeerreviewmaketitle

\section{Introduction}

Modern wireless communication systems heavily depend on multi-antenna technology to maximize performance in terms of reliability and throughput. As these systems shift to higher frequency bands, such as millimeter wave (mmWave) and sub-terahertz (THz) frequencies \cite{Priebe14,Shafi18}, multi-antenna technology becomes crucial for overcoming the more challenging propagation conditions. These higher frequency bands are characterized by significant propagation losses and increased material absorption, leading to communication channels with a dominant line-of-sight component. 
This introduces challenges for traditional multi-antenna signal processing techniques, which must be redesigned in order to effectively utilize the inherent structure of these channels. Simultaneously, the adoption of extra-large antenna arrays is growing to satisfy the high-performance requirements of future wireless networks These extensive antenna setups are approaching the size of the transmitter-receiver distance, which is itself being reduced to mitigate high propagation losses. This marks a fundamental shift from conventional wireless architectures, which typically assume operation in the far field. In these evolving scenarios, the assumption of planar electromagnetic wave fronts is no longer valid, necessitating the adoption of more sophisticated near-field channel models. Interestingly, this near-field structure offers potential performance advantages in line-of-sight conditions \cite{Driessen99,Jiang05,Bohagen09,Sanguinetti23}. 

One of the main advantages of operating in the near field is the fact that spatial filters may be used to do \emph{beamfocusing}, as opposed to conventional beamforming (in the far field). This typically refers to the capability of focusing the energy of the transmitted signal towards a specific spatial location rather than just an angular direction (as beamforming). In order words, a spatial filter may be designed so that it targets a specific point in space \cite{Zhang23,bjornson2024towards,Liu23,Ramezani2024} rather than just a direction of radiation. This opens up the possibility of multiplexing communications no only in angle but also in location, so that multiple user equipments could in principle share the same radio link even if they are all seen from the same direction of arrival \cite{Myers22,Wu23}. 

A lot of research has recently taken place aimed at the characterization of the beamfocusing phenomenon. Most of the literature has been focused on the characterization of the spatial region where beamfocusing is feasible (i.e. the beamfocusing feasibility region) as well as the determination of the spatial resolution of the spatial response (both in angle and depth). Research so far has considered simplified channel models that essentially disregard the amplitude variations across the array \cite{kosasih24,Bjornson21asilomar}, an approximation that is less valid when the dimensions of the array are comparable in magnitude with the propagation distance. This assumption is dropped in the present contribution, where a slightly more general channel model based on the Green function is adopted. This is inspired by work in \cite{Wei23}, which considers a multi-user MIMO setting based on a channel model with three spatial polarizations based on the Green dyadic function. This channel model has also been recently proposed in \cite{gong2024near} to analyze the capacity of holographic surfaces composed of densely-packed surface radiators with arbitrary placements, as well as in \cite{agustin23}, where infinitesimal dipoles with three orthogonal polarizations were considered on both ends of the communication link.

The Green dyadic channel model \cite{Poon05} represents the situation in which the radiating elements are composed of up to three orthogonal infinitesimal dipoles. This can be taken as an approximation of a more realistic situation in which the radiating elements have a more complicated radiation pattern. It can also be used to model the behavior of arrays of metasurface antennas \cite{Imani20,Lipworth13,Lipworth15,Smith17,Imani18}. These are the basic implementation of holographic apertures \cite{Dardari21,Huang20,Gong24survey,gong2024near,Pizzo22}, also known as continuous aperture architectures \cite{Sayeed10,Sayeed11} or large intelligent surfaces \cite{Hu18} in the array processing literature. Holographic apertures act as a continuous antenna array, dynamically shaping and controlling signal generation and electromagnetic wave projection, effectively creating "holograms" of electromagnetic fields \cite{Gong24survey}.
The behavior of such holographic apertures can be accurately modeled by considering an asymptotically large number of antennas with an asymptotically small inter-element spacing. This approach treats the array aperture as a continuum of radiating elements capable of manipulating electromagnetic waves at the most fundamental physical level. 

The present paper builds upon the approach in \cite{agustin24twc}, which proposed the use of the asymptotic holographic description of a general uniform planar array to characterize the achievable rate of a single-user channel employing multiple polarizations. In this paper, we will employ the same channel model and approach in order to further investigate the beamfocusing capabilities of a ULA in the holographic regime. The study allows to conclude that beamfocusing is only possible when the total length of the ULA is at least $4.4\lambda$ where $\lambda$ is the transmission wavelength. Simple analytical formulas are also provided that allow to describe the range of distances where beamfocusing is possible for each given elevation angle. 

\section{Scenario and Signal Model}

Consider a downlink transmission from a uniform linear array (ULA) towards a certain intended receiver\footnote{Note that most of the conclusions also hold in the reverse (uplink) situation due to the symmetry of the problem.}.
The transmit ULA is composed of $2M+1$ elements placed along the
$y$-axis with an interelement separation of $\Delta_{T}$ meters, see further Fig.~\ref{fig:Scenario}. Each element
of the ULA consists of up to three infinitesimal dipoles aligned with the
three cartesian coordinates. We will denote as $t_{\mathrm{pol}}$ the number
of polarizations employed at the transmit array, so that when $t_{\mathrm{pol}
}=1$ (resp. $t_{\mathrm{pol}}=2$, $t_{\mathrm{pol}}=3$) each ULA\ element
employs one (resp. two, three) linear polarizations along the $x$-axis (resp.
$x,y$-axes, $x,y,z$-axes).

\begin{figure}[h] 
    \centering 
\centerline{

\begin{tikzpicture}[>={latex},scale=0.4,every node/.style={scale=0.4}]

\def\tripole at (#1,#2){
\node (A) [scale = 0.3, cylinder, shape border rotate=90, shape aspect=.5, draw = red, minimum height= 60, minimum width=1] at (#1+0,#2+0.35) {};
\node (A) [scale = 0.3, cylinder, shape border rotate=90, shape aspect=.5, draw = red, minimum height= 60, minimum width=1] at (#1+0,#2-0.35) {};
\node (A) [scale = 0.3,cylinder, shape border rotate=0, shape aspect=.5, draw = blue, minimum height= 60, minimum width=1] at (#1+0.35,#2+0) {};
\node (A) [scale = 0.3,cylinder, shape border rotate=0, shape aspect=.5, draw = blue, minimum height= 60, minimum width=1] at (#1-0.35,#2+0) {};
\node (A) [scale = 0.3,cylinder, rotate=43, shape aspect=.5, draw = black!60!green, minimum height= 60, minimum width=1] at (#1+0.35/1.25,#2+0.32/1.25) {};
\node (A) [scale = 0.3,cylinder, rotate=43, shape aspect=.5, draw = black!60!green, minimum height= 60, minimum width=1] at (#1-0.35/1.2,#2-0.33/1.2) {};
}

\def\tripoleGros at (#1,#2){
\node (A) [cylinder, shape border rotate=90, shape aspect=.5, draw = red, minimum height= 60, minimum width=1] at (#1+0,#2+1.3) {};
\node (A) [cylinder, shape border rotate=90, shape aspect=.5, draw = red, minimum height= 60, minimum width=1] at (#1+0,#2-1.3) {};
\node (A) [cylinder, shape border rotate=0, shape aspect=.5, draw = blue, minimum height= 60, minimum width=1] at (#1+1.3,#2+0) {};
\node (A) [cylinder, shape border rotate=0, shape aspect=.5, draw = blue, minimum height= 60, minimum width=1] at (#1-1.3,#2+0) {};
\node (A) [cylinder, rotate=43, shape aspect=.5, draw = black!60!green, minimum height= 60, minimum width=1] at (#1+1.3/1.25,#2+1.1/1.25) {};
\node (A) [cylinder, rotate=43, shape aspect=.5, draw = black!60!green, minimum height= 60, minimum width=1] at (#1-1.3/1.2,#2-1.5/1.2) {};
}

\draw [->] (0,0) node (v4) {} -- (12,0) node[right] {\LARGE $z$};
\draw [->] (0,-1) -- (0,5) node[left] {\LARGE $x$};
\draw [<-] (-1.3*4,-1.2*4)  node[below] {\LARGE $y$} -- (1.3*4,1.2*4) ;
\draw [<->] (-1.3*3.1-2.3,-1.2*3.1)  -- (1.3*3.1-2.3,1.2*3.1) ;
\node [right] at (-2.25,1.25) {\LARGE $2L$};

 \foreach \x in {-3,...,3}
       	\tripole at (1.3*\x,1.2*\x);


%


\draw [<->] (-2.2,-1.2) -- (-0.9,0);
\node [right] at (-2.5,-0.3) {\LARGE $\Delta_T$};

\node [right] at (-5.5,-3) {\LARGE $M$};
\node [right] at (-5.6,-1.7) {\LARGE $(M-1)$};
\node [right] at (0.0,3.5) {\LARGE $-(M-1)$};
\node [right] at (2.5,4.5) {\LARGE $-M$};
 
 
\node [right] at (5.8,1.6) {\LARGE $D$};

\tripoleGros at (7, -3.5); 

\begin{scope} [scale=1.5, shift={(-5.5,1.5)}]
\draw[red] (10.15,-3.62) .. controls (10.35,-4.1) and (10.9,-4.1) .. (11.3,-4.2);
\draw[red] (10.15,-4) .. controls (10.15,-3.7) and (10.5,-4.3) .. (11.3,-4.35);
\draw [blue] (9.99,-3.83) .. controls (10.34,-3.74) and (9.65,-3.54) .. (9.2,-3.25);
\draw [blue] (10.37,-3.84) .. controls (10.14,-3.81) and (9.65,-3.3) .. (9.2,-3.1);
\draw [black!60!green] (10.36,-3.68) .. controls (9.95,-4.05) and (9.7,-4.05) .. (9.35,-4.1);
\draw [black!60!green] (9.95,-4.15) .. controls (10.05,-4.05) and (9.8,-4.05) .. (9.35,-4.25);
\draw [blue] (9.2,-3) rectangle node {RF} (8.5,-3.4);
\draw [red] (11.85,-4.1) rectangle node {RF} (11.3,-4.45);
\draw [black!60!green] (8.65,-4) rectangle node {RF} (9.35,-4.35);

\end{scope}

\draw[->] (v4) -- (9.6,2) node (v1) {};
 \tripole at (9.6,2);

\draw (3.5,0) .. controls (4,0.5) and (4,0.5) .. (4,0.8);
\node at (4.2,0.5) {\LARGE $\theta$};


\end{tikzpicture}}
\caption{Scenario configuration. The transmitter is a ULA consisting of $2M+1$ elements, each incorporating 3 orthogonal infinitesimal dipoles. It is assumed that at least the polarization in the $x$-axis (red dipoles) is employed at the transmitter. Here, $D$ is the distance between the receiver and the center of the array, $\theta$ is the elevation angle.}
    \label{fig:Scenario} 
\end{figure}

Let us now consider the case where only one symbol stream is transmitted, and
the whole transmit array is used to focalize the energy towards a certain
point with cartesian coordinates $\mathbf{r}_{0}\in\mathbb{R}^{3}$. The signal
model at a certain point with coordinates $\mathbf{r}\in\mathbb{R}^{3}$ can be
expressed as
\[
\mathbf{y}=\mathbf{H}_{\mathrm{pol}}\left(  \mathbf{r}\right)  \mathbf{w}
x+\mathbf{n}
\]
where $\mathbf{w}$ is the unit norm tramsmit spatial filter, of dimensions
$\left(  2M+1\right)  t_{\mathrm{pol}}$, $x$ is the transmit signal and
$\mathbf{n}$ is the noise vector (with zero mean and power $\sigma^{2}$). We
will assume that the receiver consists of three infinitesimal dipoles that are
independently downconverted/sampled, so that both $\mathbf{y}$ and
$\mathbf{n}$ have dimensions $r_{\mathrm{pol}}=3$ in this study. The channel
matrix $\mathbf{H}_{\mathrm{pol}}\left(  \mathbf{r}\right)  $ can be
decomposed as
\begin{equation}
\mathbf{H}_{\mathrm{pol}}\left(  \mathbf{r}\right)  =\left[  \mathbf{H}
_{-M}^{t_{\mathrm{pol}}\times r_{\mathrm{pol}}}\left(  \mathbf{r}\right)
,\ldots,\mathbf{H}_{M}^{t_{\mathrm{pol}}\times r_{\mathrm{pol}}}\left(
\mathbf{r}\right)  \right]  \label{eq:HpolFromHm}
\end{equation}
where $\mathbf{H}_{m}^{t_{\mathrm{pol}}\times r_{\mathrm{pol}}}\left(
\mathbf{r}\right)  =\left[  \mathbf{H}_{m}\left(  \mathbf{r}\right)  \right]
_{1:r_{\mathrm{pol}}\text{,}1:t_{\mathrm{pol}}}$ and where $\mathbf{H}
_{m}\left(  \mathbf{r}\right)  $ is a $3\times3$ channel matrix between the
three polarizations at the $m$th element of the\ ULA and the three
polarizations at the receiver. Disregarding the reactive terms and considering
only the radiative electromagnetic field, the channel model can be expressed
as$\ \mathbf{H}_{m}\left(  \mathbf{r}\right)  =h_{m}\left(  \mathbf{r}\right)
\mathbf{P}_{m}^{\mathbf{\perp}}(\mathbf{r})$ with
\begin{equation}
h_{m}\left(  \mathbf{r}\right)  =\frac{\xi}{\lambda}\frac{1}{\left\Vert
\mathbf{r-p}_{m}\right\Vert }\exp\left(  -\mathrm{j}\frac{2\pi}{\lambda
}\left\Vert \mathbf{r-p}_{m}\right\Vert \right)  \label{eq:defLowCaseh}
\end{equation}
where $\xi$ is a certain complex constant\footnote{If the channel is
considered to be proportional to the electromagnetic field, the constant $\xi$
can be seen to be proportional to the permittivity of the medium.}, $\lambda
$\ is the transmission wavelength, $\mathbf{p}_m$ is position vector of the $m$th element of the ULA and where $\mathbf{P}_{m}^{\mathbf{\perp}
}(\mathbf{r})$ is the orthogonal projection matrix
\[
\mathbf{P}_{m}^{\mathbf{\perp}}(\mathbf{r})=\mathbf{I}_{3}-\mathbf{P}
_{m}(\mathbf{r}),\text{\quad}\mathbf{P}_{m}(\mathbf{r})=\frac{\left(
\mathbf{r-p}_{m}\right)  \left(  \mathbf{r-p}_{m}\right)  ^{T}}{\left\Vert
\mathbf{r-p}_{m}\right\Vert ^{2}}.
\]
Consider the singular value decomposition
\[
\frac{1}{\sqrt{2M+1}}\mathbf{H}_{\mathrm{pol}}\left(  \mathbf{r}_{0}\right)
=\mathbf{U\Sigma}^{1/2}\mathbf{V}^{H}.
\]
Let $\mathbf{u}$ and $\mathbf{v}$ respectively denote the left and right
singular vector associated to the maximum singular value. It is well known
that the transmit filter $\mathbf{w=v}$ is the one that maximizes the SNR\ at
a receiver located at $\mathbf{r}=\mathbf{r}_{0}$. We will assume from now on
that this is the spatial filter that is used at the transmit side.

Assume that the transmit power is fixed to $\mathbb{E}\left\vert x\right\vert
^{2}=\bar{P}/(2M+1)$, i.e. it decreases linearly with the number of transmit
antennas. We define the received signal to noise ratio at the position
$\mathbf{r}$ as
\begin{equation}
\mathsf{SNR}(\mathbf{r})=\bar{P}\frac{\mathbf{w}^{H}\mathbf{H}_{\mathrm{pol}
}^{H}\left(  \mathbf{r}\right)  \mathbf{H}_{\mathrm{pol}}\left(
\mathbf{r}\right)  \mathbf{w}}{\sigma^{2}\left(  2M+1\right)  }.
\label{eq:SNRdef}
\end{equation}
In particular, $\mathsf{SNR}(\mathbf{r}_{0})\,$\ is the SNR at the intended
location, that is
\[
\mathsf{SNR}(\mathbf{r}_{0})=\frac{\bar{P}\lambda_{\max}}{\sigma^{2}}
\]
where $\lambda_{\max}$ is the maximum eigenvalue of $\mathbf{H}_{\mathrm{pol}
}\left(  \mathbf{r}_{0}\right)  \mathbf{H}_{\mathrm{pol}}^{H}\left(
\mathbf{r}_{0}\right)  /(2M+1)$. The objective of this paper is to study how $\mathsf{SNR}(\mathbf{r})$ behaves near the intended position $\mathbf{r}_0$. To that effect, we will carry out a perturbation analysis of $\mathsf{SNR}(\mathbf{r})$ around this point. We will say that beamfocusing is feasible at $\mathbf{r}_0$ when $\mathsf{SNR}(\mathbf{r})$ decays locally as we move away from the intended position. Hence, the beamfocusing feasibility region is defined as the region of points in space such that $\mathsf{SNR}(\mathbf{r})$ behave as a locally concave function. This approach is different from previous studies, which focused on the characterization of the \emph{global} spatial response. Our approach is, in this sense, eminently local. 

\begin{remark} \label{rem:niceConstantsPols}
Here and throughout this paper we will need to bound a number of error terms
by quantities that are not altered when considering the holographic regime
(that is when the distance between infinitesimal dipoles goes to zero). To that
effect, we define a \emph{``nice constant"} as a constant that depends only on
the quantities $\left\vert \xi\right\vert $, $\bar{P}$, $\sigma^{2}$,
$\lambda$. In particular, a nice constant does not depend on the transmit
antenna index ($m$), the number of elements of the ULA\ ($M$), the position of
the different array elements at the ULA ($\mathbf{p}_{m}$) or the position of
the receiver ($\mathbf{r}$, $\mathbf{r}_{0}$). Likewise, we define a
\emph{nice polynomial} as a polynomial whose coefficients are nice constants.
Throughout the paper, nice polynomials will generally be denoted as
$\mathrm{P}($\textperiodcentered$)$. Note, however, that the actual value of
the polynomial may vary from one line to another.
\end{remark}

The following proposition establishes a second order Taylor series
approximation of the channel matrix. We will use the definition of nice
polynomial above to provide a bound on the reminder term that will be employed
later on. The objective of this bound is to ensure that the error associated  
to the reminder term in the expansion does not scale up when considering the 
holographic approximation. 

\begin{proposition}
\label{prop:Taylor}The channel matrix $\mathbf{H}_{m}\left(  \mathbf{r}
\right)  $ accepts the following Taylor expansion around $\mathbf{r}
=\mathbf{r}_{0}$
\[
\mathbf{H}_{m}\left(  \mathbf{r}\right)  =\mathbf{H}_{m}\left(  \mathbf{r}
_{0}\right)  +\mathcal{G}_{m}\left(  \mathbf{\Delta}_{r}\right)
+\mathcal{H}_{m}\left(  \mathbf{\Delta}_{r}\right)  +\mathcal{R}
_{m}(\mathbf{\bar{r},\Delta}_{r})
\]
where $\mathbf{\Delta}_{r}=\left(  \mathbf{r}-\mathbf{r}_{0}\right)  =\left[
\Delta_{x},\Delta_{y},\Delta_{z}\right]  ^{T}$ and where $\mathbf{\bar{r}}$ is
in the segment joining $\mathbf{r}$ and $\mathbf{r}_{0}$. The three matrices
$\mathcal{G}_{m}\left(  \mathbf{\Delta}_{r}\right)  $, $\mathcal{H}_{m}\left(
\mathbf{\Delta}_{r}\right)  $ and $\mathcal{R}_{m}(\mathbf{\bar{r},\Delta}
_{r})$ respectively correspond to the gradient, Hessian and reminder terms,
and can be described as follows. The gradient term $\mathcal{G}_{m}\left(
\mathbf{\Delta}_{r}\right)  $ takes the form
\begin{align}
\mathcal{G}_{m}\left(  \mathbf{\Delta}_{r}\right)   &  =-\left(
1+\mathrm{j}\frac{2\pi}{\lambda}\left\Vert \mathbf{r}_{m,0}\right\Vert
\right)  \frac{\Delta_{r}^{T}\mathbf{r}_{m,0}}{\left\Vert \mathbf{r}
_{m,0}\right\Vert ^{2}}\mathbf{H}_{m}(\mathbf{r}_{0}) \label{eq:defGm(DeltaT)}
\\
&  -\frac{\mathbf{r}_{m,0}\Delta_{r}^{T}}{\left\Vert \mathbf{r}_{m,0}
\right\Vert ^{2}}\mathbf{H}_{m}(\mathbf{r}_{0})-\mathbf{H}_{m}(\mathbf{r}
_{0})\frac{\Delta_{r}\mathbf{r}_{m,0}^{T}}{\left\Vert \mathbf{r}
_{m,0}\right\Vert ^{2}}\nonumber
\end{align}
where $\mathbf{r}_{m,0}=\mathbf{r}_{0}-\mathbf{p}_{m}$. The Hessian term
$\mathcal{H}_{m}\left(  \mathbf{\Delta}_{r}\right)  $ can be expressed as
\begin{align}
\mathcal{H}_{m}\left(  \mathbf{\Delta}_{r}\right)  & =\frac{h_{m}(\mathbf{r}
_{0})}{\left\Vert \mathbf{r}_{m,0}\right\Vert ^{2}}\Bigg[  \mathbf{\Xi}
_{1,0}+\left(  1+\mathrm{j}\frac{2\pi}{\lambda}\left\Vert \mathbf{r}
_{m,0}\right\Vert \right)  \mathbf{\Xi}_{2,0} \nonumber  \\
& +\left(  1+\mathrm{j}\frac{2\pi
}{\lambda}\left\Vert \mathbf{r}_{m,0}\right\Vert \right)  ^{2}\mathbf{\Xi
}_{3,0}\Bigg] \label{eq:defHm(DeltaT)}
\end{align}
where the three matrices $\mathbf{\Xi}_{1,0}$, $\mathbf{\Xi}_{2,0}$,
$\mathbf{\Xi}_{3,0}$ take the form
\begin{align*}
\mathbf{\Xi}_{1,0}  &  =-2\mathbf{P}_{m}^{\perp}(\mathbf{r}_{0})\mathbf{\Delta
}_{r}\mathbf{\Delta}_{r}^{T}\mathbf{P}_{m}^{\perp}(\mathbf{r}_{0})\\
&  +2\mathbf{P}_{m}(\mathbf{r}_{0})\mathbf{\Delta}_{r}\mathbf{\Delta}_{r}
^{T}\mathbf{P}_{m}^{\perp}(\mathbf{r}_{0})+2\mathbf{P}_{m}^{\perp}
(\mathbf{r}_{0})\mathbf{\Delta}_{r}\mathbf{\Delta}_{r}^{T}\mathbf{P}
_{m}(\mathbf{r}_{0})\\
&  +\left(  \mathbf{\Delta}_{r}^{T}\mathbf{P}_{m}(\mathbf{r}_{0}
)\mathbf{\Delta}_{r}\right)  \mathbf{P}_{m}^{\perp}(\mathbf{r}_{0})+2\left(
\mathbf{\Delta}_{r}^{T}\mathbf{P}_{m}^{\perp}(\mathbf{r}_{0})\mathbf{\Delta
}_{r}\right)  \mathbf{P}_{m}(\mathbf{r}_{0})\\
\mathbf{\Xi}_{2,0}  &  =2\mathbf{P}_{m}(\mathbf{r}_{0})\mathbf{\Delta}
_{r}\mathbf{\Delta}_{r}^{T}\mathbf{P}_{m}^{\perp}(\mathbf{r}_{0}
)+2\mathbf{P}_{m}^{\perp}(\mathbf{r}_{0})\mathbf{\Delta}_{r}\mathbf{\Delta
}_{r}^{T}\mathbf{P}_{m}(\mathbf{r}_{0})\\
&  -\left(  \mathbf{\Delta}_{r}^{T}\mathbf{P}_{m}^{\perp}(\mathbf{r}
_{0})\mathbf{\Delta}_{r}\right)  \mathbf{P}_{m}^{\perp}(\mathbf{r}_{0})\\
\mathbf{\Xi}_{3,0}  &  =\left(  \mathbf{\Delta}_{r}^{T}\mathbf{P}
_{m}(\mathbf{r}_{0})\mathbf{\Delta}_{r}\right)  \mathbf{P}_{m}^{\perp
}(\mathbf{r}_{0}).
\end{align*}
Finally, the spectral norm of the reminder term $\mathcal{R}_{m}
(\mathbf{\bar{r},\Delta}_{r})$ can be upper bounded as
\[
\left\Vert \mathcal{R}_{m}(\mathbf{\bar{r},\Delta}_{r})\right\Vert
\leq\mathrm{P}\left(  \left\Vert \mathbf{\bar{r}-p}_{m}\right\Vert
^{-1}\right)  \left\Vert \mathbf{\Delta}_{r}\right\Vert ^{3}
\]
where $\mathrm{P}($\textperiodcentered$)$ is a nice polynomial (cf. definition in Remark~\ref{rem:niceConstantsPols}).
\end{proposition}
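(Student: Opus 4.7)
The plan is to treat this as a straightforward but tedious exercise in multivariable calculus applied to the product $\mathbf{H}_m(\mathbf{r})=h_m(\mathbf{r})\mathbf{P}_m^{\perp}(\mathbf{r})$, keeping track of powers of $\|\mathbf{r}-\mathbf{p}_m\|^{-1}$ so as to obtain a ``nice polynomial'' bound in the end. The first step is to differentiate $h_m$ and $\mathbf{P}_m^{\perp}$ separately as functions of $\mathbf{r}$. Writing $\mathbf{r}_m=\mathbf{r}-\mathbf{p}_m$ and using $\nabla\|\mathbf{r}_m\|=\mathbf{r}_m/\|\mathbf{r}_m\|$, one obtains along a direction $\mathbf{\Delta}_r$
\[
\partial_{\mathbf{\Delta}_r}h_m(\mathbf{r})=-\left(1+\mathrm{j}\tfrac{2\pi}{\lambda}\|\mathbf{r}_m\|\right)\frac{\mathbf{\Delta}_r^{T}\mathbf{r}_m}{\|\mathbf{r}_m\|^{2}}\,h_m(\mathbf{r}),
\]
and after differentiating $\mathbf{P}_m(\mathbf{r})=\mathbf{r}_m\mathbf{r}_m^{T}/\|\mathbf{r}_m\|^{2}$ directly,
\[
\partial_{\mathbf{\Delta}_r}\mathbf{P}_m^{\perp}(\mathbf{r})=-\frac{\mathbf{\Delta}_r\mathbf{r}_m^{T}+\mathbf{r}_m\mathbf{\Delta}_r^{T}}{\|\mathbf{r}_m\|^{2}}+2\frac{\mathbf{r}_m^{T}\mathbf{\Delta}_r}{\|\mathbf{r}_m\|^{2}}\mathbf{P}_m(\mathbf{r}).
\]
Combining these via the product rule, evaluating at $\mathbf{r}_0$, and eliminating the explicit $\mathbf{P}_m$/$\mathbf{P}_m^{\perp}$ factors by repeated use of $\mathbf{P}_m^{\perp}(\mathbf{r}_0)+\mathbf{P}_m(\mathbf{r}_0)=\mathbf{I}_3$ and $\mathbf{P}_m(\mathbf{r}_0)\mathbf{r}_{m,0}=\mathbf{r}_{m,0}$, recovers exactly the three-term expression claimed for $\mathcal{G}_m(\mathbf{\Delta}_r)$.

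For the Hessian, I would apply $\partial_{\mathbf{\Delta}_r}$ one more time to the gradient expression. The key observation that organizes the bookkeeping is that each derivative of $h_m$ brings out one factor of $(1+\mathrm{j}\tfrac{2\pi}{\lambda}\|\mathbf{r}_m\|)$, while derivatives of any other rational factor in $\|\mathbf{r}_m\|$ and $\mathbf{r}_m$ do not. Hence the second directional derivative naturally splits into three groups corresponding to zero, one, and two factors of $(1+\mathrm{j}\tfrac{2\pi}{\lambda}\|\mathbf{r}_{m,0}\|)$, which are precisely $\mathbf{\Xi}_{1,0}$, $\mathbf{\Xi}_{2,0}$, $\mathbf{\Xi}_{3,0}$ (times the common prefactor $h_m(\mathbf{r}_0)/\|\mathbf{r}_{m,0}\|^{2}$). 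After substitution and again replacing $\mathbf{r}_{m,0}\mathbf{r}_{m,0}^{T}/\|\mathbf{r}_{m,0}\|^{2}$ by $\mathbf{P}_m(\mathbf{r}_0)$ and using $\mathbf{I}_3-\mathbf{P}_m(\mathbf{r}_0)=\mathbf{P}_m^{\perp}(\mathbf{r}_0)$, each group simplifies to the stated quadratic form in $\mathbf{\Delta}_r$. This is the step I expect to be the main obstacle, as the algebra is lengthy and the three groups intertwine before the final simplification.

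For the remainder, I would invoke Taylor's theorem with the integral (or Lagrange) form of the reminder, so that $\mathcal{R}_m(\bar{\mathbf{r}},\mathbf{\Delta}_r)$ is written as a third-order term involving the directional derivative $\partial^{3}_{\mathbf{\Delta}_r}\mathbf{H}_m$ evaluated at a point $\bar{\mathbf{r}}$ on the segment joining $\mathbf{r}_0$ and $\mathbf{r}$. A direct induction on differentiation shows that the $k$th directional derivative of $\mathbf{H}_m(\mathbf{r})$ has operator norm bounded by a polynomial in $\|\bar{\mathbf{r}}-\mathbf{p}_m\|^{-1}$ (with coefficients that are polynomials in $|\xi|/\lambda$ and $2\pi/\lambda$, i.e.\ nice constants in the sense of Remark~\ref{rem:niceConstantsPols}) multiplied by $\|\mathbf{\Delta}_r\|^{k}$; intuitively, each spatial derivative brings down at most an extra $\|\mathbf{r}_m\|^{-1}$ or $(2\pi/\lambda)\|\mathbf{r}_m\|^{-1}\cdot\|\mathbf{r}_m\|=(2\pi/\lambda)$, while $h_m$ itself contributes $|\xi|/(\lambda\|\mathbf{r}_m\|)$ and $\mathbf{P}_m^{\perp}$ is uniformly bounded by $1$. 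Specializing to $k=3$ yields exactly the claimed bound $\|\mathcal{R}_m(\bar{\mathbf{r}},\mathbf{\Delta}_r)\|\leq \mathrm{P}(\|\bar{\mathbf{r}}-\mathbf{p}_m\|^{-1})\|\mathbf{\Delta}_r\|^{3}$, which completes the proof.
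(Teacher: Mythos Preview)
Your proposal is correct and follows essentially the same route as the paper: compute the directional derivatives of $h_m$ and $\mathbf{P}_m^{\perp}$ separately (the paper packages these into a small lemma), combine via the product rule for $\mathcal{G}_m$, differentiate once more and group by powers of $(1+\mathrm{j}\tfrac{2\pi}{\lambda}\|\mathbf{r}_{m,0}\|)$ for $\mathcal{H}_m$, and then control the remainder through the third derivative. The only cosmetic difference is in the remainder step: the paper writes the third directional derivative out explicitly and bounds each $\mathbf{\Xi}_j$, $\tilde{\mathbf{\Xi}}_j$ term by hand, whereas you invoke a clean inductive observation that each differentiation contributes at most a factor of $\|\mathbf{r}_m\|^{-1}$ or $2\pi/\lambda$; both arguments lead to the same nice-polynomial bound.
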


\begin{proof}
See Appendix \ref{sec:AppendixTaylor}.
\end{proof}

Based on the above result and mimicking the construction of the channel matrix
$\mathbf{H}_{\mathrm{pol}}\left(  \mathbf{r}\right)  $ in (\ref{eq:HpolFromHm}) from the individual $\mathbf{H}_{m}\left(  \mathbf{r}\right)  $, we can more
compactly write
\begin{equation}
\mathbf{H}_{\mathrm{pol}}\left(  \mathbf{r}\right)  =\mathbf{H}_{\mathrm{pol}
}\left(  \mathbf{r}_{0}\right)  +\mathcal{G}_{\mathrm{pol}}\left(
\mathbf{\Delta}_{r}\right)  +\mathcal{H}_{\mathrm{pol}}\left(  \mathbf{\Delta
}_{r}\right)  +\mathcal{R}_{\mathrm{pol}}(\mathbf{\bar{r},\Delta}_{r})
\label{eq:TaylorChannel}
\end{equation}
where $\mathcal{G}_{\mathrm{pol}}\left( \mathbf{ \Delta}_{r}\right)  $, $\mathcal{H}
_{\mathrm{pol}}\left(  \mathbf{\Delta}_{r}\right)  $ and $\mathcal{R}_{\mathrm{pol}
}(\mathbf{\bar{r},\mathbf{\Delta}}_{r})$ have dimensions $r_{\mathrm{pol}}\times\left(
2M+1\right)  t_{\mathrm{pol}}$ and are built from the corresponding quantities
in Proposition \ref{prop:Taylor} in the same way as (\ref{eq:HpolFromHm}).
Using (\ref{eq:TaylorChannel}) in (\ref{eq:SNRdef}) we can obtain the
following interesting result.

\begin{corollary}
\label{cor:TaylorSNR}The SNR in (\ref{eq:SNRdef}) accepts the expansion
\begin{equation}
\mathsf{SNR}(\mathbf{r})   =\mathsf{SNR}(\mathbf{r}_{0})  +\frac{\bar{P}
}{\sigma^{2}}\frac{\mathbf{w}^{H}\left( \ast \right)
\mathbf{w}}{2M+1} +\epsilon_{M}\label{eq:TaylorSNR} 
\end{equation}
where 
\begin{align} \label{eq:ast}
(\ast) & =  \mathcal{G}_{\mathrm{pol}}^{H}\left(
\mathbf{\Delta}_{r}\right)  \mathbf{H}_{\mathrm{pol}}\left(  \mathbf{r}
_{0}\right)  +\mathbf{H}_{\mathrm{pol}}^{H}\left(  \mathbf{r}_{0}\right)
\mathcal{G}_{\mathrm{pol}}\left(  \mathbf{\Delta}_{r}\right) 
\\ &+ \mathcal{H}
_{\mathrm{pol}}^{H}\left(  \mathbf{\Delta}_{r}\right)  \mathbf{H}
_{\mathrm{pol}}\left(  \mathbf{r}_{0}\right)    +\mathbf{H}_{\mathrm{pol}}^{H}\left(
\mathbf{r}_{0}\right)  \mathcal{H}_{\mathrm{pol}}\left(  \mathbf{\Delta}
_{r}\right) 
\\ & + \mathcal{G}_{\mathrm{pol}}
^{H}\left(  \mathbf{\Delta}_{r}\right)  \mathcal{G}_{\mathrm{pol}}\left(
\mathbf{\Delta}_{r}\right)
\end{align}
and where $\epsilon_{M}$ is a third-order error term and is such that $\left\vert
\epsilon_{M}\right\vert \leq K\left\Vert \mathbf{\Delta}_{r}\right\Vert ^{3}$
for a certain nice constant $K$.
\end{corollary}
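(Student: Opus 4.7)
The plan is to substitute the channel Taylor expansion (\ref{eq:TaylorChannel}) directly into the SNR definition (\ref{eq:SNRdef}) and fully expand the quadratic form $\mathbf{H}_{\mathrm{pol}}^{H}(\mathbf{r})\mathbf{H}_{\mathrm{pol}}(\mathbf{r})$ as the sum of the sixteen cross products generated by the four summands $\mathbf{H}_{0}\equiv\mathbf{H}_{\mathrm{pol}}(\mathbf{r}_{0})$, $\mathcal{G}_{\mathrm{pol}}$, $\mathcal{H}_{\mathrm{pol}}$ and $\mathcal{R}_{\mathrm{pol}}$. The zeroth-order block $\mathbf{H}_{0}^{H}\mathbf{H}_{0}$, once sandwiched with the optimal $\mathbf{w}$ and divided by $2M+1$, produces exactly $\lambda_{\max}$ by construction of $\mathbf{w}$ as the right singular vector of $(2M+1)^{-1/2}\mathbf{H}_{0}$, and multiplication by $\bar{P}/\sigma^{2}$ reproduces $\mathsf{SNR}(\mathbf{r}_{0})$. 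The five remaining cross products that are first- or second-order in $\mathbf{\Delta}_{r}$ without involving $\mathcal{R}_{\mathrm{pol}}$---the gradient terms $\mathbf{H}_{0}^{H}\mathcal{G}_{\mathrm{pol}}+\mathcal{G}_{\mathrm{pol}}^{H}\mathbf{H}_{0}$, the Hessian terms $\mathbf{H}_{0}^{H}\mathcal{H}_{\mathrm{pol}}+\mathcal{H}_{\mathrm{pol}}^{H}\mathbf{H}_{0}$, and the gradient self-product $\mathcal{G}_{\mathrm{pol}}^{H}\mathcal{G}_{\mathrm{pol}}$---assemble exactly into the matrix $(\ast)$ in (\ref{eq:ast}).

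Everything else is absorbed into $\epsilon_{M}$. These residual contributions split into (i) a polynomial tail $\mathcal{G}_{\mathrm{pol}}^{H}\mathcal{H}_{\mathrm{pol}}+\mathcal{H}_{\mathrm{pol}}^{H}\mathcal{G}_{\mathrm{pol}}+\mathcal{H}_{\mathrm{pol}}^{H}\mathcal{H}_{\mathrm{pol}}$ of orders three and four in $\|\mathbf{\Delta}_{r}\|$, and (ii) the seven cross products involving at least one factor $\mathcal{R}_{\mathrm{pol}}$, which are of order at least three since $\|\mathcal{R}_{m}\|=O(\|\mathbf{\Delta}_{r}\|^{3})$ by Proposition \ref{prop:Taylor}. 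Checking that every residual term is $O(\|\mathbf{\Delta}_{r}\|^{3})$ is then routine bookkeeping.

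The main obstacle is the \emph{uniform} bound $|\epsilon_{M}|\leq K\|\mathbf{\Delta}_{r}\|^{3}$ with $K$ nice in the sense of Remark \ref{rem:niceConstantsPols}. My strategy is to reduce to per-antenna spectral-norm estimates: by inspecting the explicit formulas (\ref{eq:defGm(DeltaT)})--(\ref{eq:defHm(DeltaT)}), each summand can be controlled in terms of $|h_{m}(\mathbf{r}_{0})|$ and powers of $\|\mathbf{r}_{m,0}\|^{-1}$, yielding per-block estimates $\|\mathcal{G}_{m}\|\leq\mathrm{P}(\|\mathbf{r}_{m,0}\|^{-1})\|\mathbf{\Delta}_{r}\|$ and $\|\mathcal{H}_{m}\|\leq\mathrm{P}(\|\mathbf{r}_{m,0}\|^{-1})\|\mathbf{\Delta}_{r}\|^{2}$ for nice polynomials $\mathrm{P}$, which complement the $\|\mathcal{R}_{m}\|$ bound already supplied by Proposition \ref{prop:Taylor}.

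To transfer these per-antenna estimates to $\mathbf{w}^{H}\epsilon_{M}\mathbf{w}/(2M+1)$, I would decompose $\mathbf{w}$ block-wise as $\mathbf{w}=[\mathbf{w}_{-M}^{T},\ldots,\mathbf{w}_{M}^{T}]^{T}$ with $\sum_{m}\|\mathbf{w}_{m}\|^{2}=1$, and apply Cauchy--Schwarz block-wise to each residual cross product. This produces bounds of the form $(2M+1)^{-1}\sum_{m}\mathrm{P}(\|\mathbf{r}_{m,0}\|^{-1})\|\mathbf{\Delta}_{r}\|^{3}$. The genuinely delicate point is to argue that the antenna-averaged sum is dominated by a nice constant independent of $M$ and $\{\mathbf{p}_{m}\}$: in the holographic regime it is a Riemann approximation of a bounded integral over the finite ULA aperture, whose integrand stays bounded as long as the receiver is kept at a strictly positive distance from the array axis.
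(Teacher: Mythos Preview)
Your proposal is correct and follows essentially the same route as the paper: expand the quadratic form via (\ref{eq:TaylorChannel}), keep the zeroth-, first- and second-order pieces, and bound the tail using the per-antenna spectral-norm estimates $\|\mathbf{H}_m(\mathbf{r}_0)\|\leq\mathrm{P}(\|\mathbf{r}_{m,0}\|^{-1})$, $\|\mathcal{G}_m\|\leq\mathrm{P}(\|\mathbf{r}_{m,0}\|^{-1})\|\mathbf{\Delta}_r\|$, $\|\mathcal{H}_m\|\leq\mathrm{P}(\|\mathbf{r}_{m,0}\|^{-1})\|\mathbf{\Delta}_r\|^{2}$ together with the remainder bound from Proposition~\ref{prop:Taylor}.

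The one place where you take a longer road than necessary is the final step. You frame the control of $(2M+1)^{-1}\sum_m\mathrm{P}(\|\mathbf{r}_{m,0}\|^{-1})$ as a Riemann-sum argument whose boundedness is to be verified in the holographic limit. The paper avoids this entirely with an elementary geometric observation: since the ULA lies on the $y$-axis, one has $\|\mathbf{r}_{m,0}\|\geq z_0$ and $\|\bar{\mathbf{r}}-\mathbf{p}_m\|\geq\min\{z,z_0\}$ for every $m$, so each factor $\mathrm{P}(\|\mathbf{r}_{m,0}\|^{-1})$ is dominated uniformly by $\mathrm{P}(z_0^{-1})$ (respectively $\mathrm{P}(\min\{z,z_0\}^{-1})$). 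This makes the per-antenna bounds uniform in $m$ and $M$ outright, so no averaging or limit argument is needed and the inequality $|\epsilon_M|\leq K\|\mathbf{\Delta}_r\|^{3}$ holds directly for each finite $M$.
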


\begin{proof}
See Appendix \ref{sec:AppendixTaylorSNR}.
\end{proof}

We recall that $\mathbf{w}$\ is chosen to be the left singular vector
associated to the maximum singular value of the channel matrix $\mathbf{H}
_{\mathrm{pol}}\left(  \mathbf{r}_{0}\right)  $. This implies that
$\mathbf{w}$ alternatively expressed as\footnote{To see this, simply use the
SVD formula $\mathbf{H}_{\mathrm{pol}}^{H}\left(  \mathbf{r}_{0}\right)
=\sqrt{2M+1}\mathbf{\mathbf{V}\Sigma}^{1/2}\mathbf{U}^{H}$.}
\begin{equation}
\mathbf{w}=\frac{1}{\sqrt{2M+1}\sqrt{\lambda_{\max}}}\mathbf{H}_{\mathrm{pol}
}^{H}\left(  \mathbf{r}_{0}\right)  \mathbf{u} \label{eq:wSimplified}
\end{equation}
where we recall that $\mathbf{u}$ is the eigenvector associated to the maximum
eigenvalue of $\mathbf{H}_{\mathrm{pol}}\left(  \mathbf{r}_{0}\right)
\mathbf{H}_{\mathrm{pol}}^{H}\left(  \mathbf{r}_{0}\right)  /(2M+1)$, denoted
as $\lambda_{\max}$. By inserting (\ref{eq:wSimplified}) into the expression
of $\mathsf{SNR}(\mathbf{r})$ in (\ref{eq:TaylorSNR}) we see that the
expression can be reformulated as
\begin{equation}
\mathsf{SNR}(\mathbf{r})   =\mathsf{SNR}(\mathbf{r}_{0})  +\frac{\bar{P}
}{\sigma^{2}}\mathbf{u}^{H}\left( \star \right)
\mathbf{u} +\epsilon_{M} \label{eq:TaylorSNRsimple}
\end{equation}
where now 
\begin{align*}
    \left( \star \right) & = \frac{\mathbf{H}_{\mathrm{pol}}\left(
\mathbf{r}_{0}\right)  \mathcal{G}_{\mathrm{pol}}^{H}\left(  \Delta
_{r}\right) + \mathcal{G}_{\mathrm{pol}}\left(
\Delta_{r}\right)  \mathbf{H}_{\mathrm{pol}}^{H}\left(  \mathbf{r}_{0}\right)}{2M+1}\\
&  + \frac{\mathbf{H}_{\mathrm{pol}
}\left(  \mathbf{r}_{0}\right)  \mathcal{H}_{\mathrm{pol}}^{H}\left(
\Delta_{r}\right)+\mathcal{H}_{\mathrm{pol}}\left(
\Delta_{r}\right)  \mathbf{H}_{\mathrm{pol}}^{H}\left(  \mathbf{r}_{0}\right)
}{2M+1}\nonumber\\
&  + \frac{\mathbf{H}_{\mathrm{pol}
}\left(  \mathbf{r}_{0}\right)  \mathcal{G}_{\mathrm{pol}}^{H}\left(
\Delta_{r}\right)  \mathcal{G}_{\mathrm{pol}}\left(  \Delta_{r}\right)
\mathbf{H}_{\mathrm{pol}}^{H}\left(  \mathbf{r}_{0}\right) 
}{\lambda_{\max}\left(  2M+1\right)  ^{2}}.
\end{align*}
Hence, we can investigate the behavior of $\mathsf{SNR}(\mathbf{r})$ around
$\mathsf{SNR}(\mathbf{r}_{0})$ by simply studying the two matrices
$\mathcal{G}_{\mathrm{pol}}\left(  \Delta_{r}\right)  \mathbf{H}
_{\mathrm{pol}}^{H}\left(  \mathbf{r}_{0}\right)  /\left(  2M+1\right)  $ and
$\mathcal{H}_{\mathrm{pol}}\left(  \Delta_{r}\right)  \mathbf{H}
_{\mathrm{pol}}^{H}\left(  \mathbf{r}_{0}\right)  /\left(  2M+1\right)  $.
These two matrices have a complicated analytical form, but we can
significantly simplify the result by assuming that the intended receiver is
located on the $yz$-plane, i.e. $x_{0}=0$. In order to present the results, we
introduce the following key quantities
\begin{align*}
s_{M}^{(k)}  &  =\frac{1}{2M+1}\sum_{m=-M}^{M}\frac{1}{\left\Vert
\mathbf{r}_{0}-\mathbf{p}_{m}\right\Vert ^{k}}\\
\bar{s}_{M}^{(k)}  &  =\frac{1}{2M+1}\sum_{m=-M}^{M}\frac{m\Delta_{T}-y_{0}
}{\left\Vert \mathbf{r}_{0}-\mathbf{p}_{m}\right\Vert ^{k+1}}
\end{align*}
where $k\in\mathbb{N}$.

\begin{proposition} \label{prop:quadric}
When $x_{0}=0$ (intended receiver on the $yz$-plane), the identity in (\ref{eq:TaylorSNRsimple}) particularizes to
\begin{equation}
\mathsf{SNR}(\mathbf{r})=\mathsf{SNR}(\mathbf{r}_{0})\left[  1-\left(
2\mathbf{\Delta}_{r}^{T}\mathfrak{m}_{M}+\mathbf{\Delta}_{r}^{T}
\mathcal{M}_{M}\mathbf{\Delta}_{r}\right)  \right]  +\epsilon_{M}
\label{eq:SNRtaylorx0=0}
\end{equation}
where the column vector $\mathfrak{m}_{M}$ and the matrix $\mathcal{M}_{M}$
are defined as follows. The column vector $\mathfrak{m}_{M}$ only has non-zero
entries in the second and third position, so that we can write $\mathfrak{m}_M = [0,(\mathfrak{m}^{(2)}_M)^T]^T$ where
\[
\mathfrak{m}_{M}^{(2)}=\frac{1}{s_{M}^{(2)}}\left[
\begin{array}
[c]{cc}
-\bar{s}_{M}^{(3)} & z_{0}s_{M}^{(4)}
\end{array}
\right]  ^{T}.
\]
Regarding $\mathcal{M}_{M}$, it is a block-diagonal matrix with diagonals
given by a scalar $\gamma_{M}^{(1)}$ and a $2\times2$ real-valued
matrix $\mathcal{M}_{M}^{(2)}$, that is
\begin{equation}
\mathcal{M}=\left[
\begin{array}
[c]{cc}
\gamma_{M}^{(1)} & \mathbf{0}\\
\mathbf{0} & \mathcal{M}_{M}^{(2)}
\end{array}
\right]  .\label{eq:defcalM}
\end{equation}
The scalar entry takes the form
\begin{equation}
\gamma_{M}^{(1)}=\frac{1}{\left(  s_{M}^{(2)}\right)  ^{2}}\left[
3s_{M}^{(2)}s_{M}^{(4)}-\left(  \bar{s}_{M}^{(3)}\right)  ^{2}-z_{0}
^{2}\left(  s_{M}^{(4)}\right)  ^{2}\right]  \label{eq:defgamma1}
\end{equation}
and is always non-negative, i.e. $\gamma_{M}^{(1)}\geq0$. As for the
$2\times2$ matrix $\mathcal{M}_{M}^{(2)}$, it can be written as $\mathcal{M}
_{M}^{(2)}=\mathcal{A}_{M}+(2\pi/\lambda)^{2}\mathcal{B}_{M}$ with
$\mathcal{A}_{M}$ and $\mathcal{B}_{M}$ being respectively defined as in (\ref{eq:def_matrixAM})-(\ref{eq:def_matrixBM}) at the top of the next page. 
\end{proposition}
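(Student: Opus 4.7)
The plan is to start from the simplified expansion in Corollary~\ref{cor:TaylorSNR} and explicitly evaluate the three quadratic forms
\[
\frac{\mathbf{u}^{H}\mathcal{G}_{\mathrm{pol}}\mathbf{H}_{\mathrm{pol}}^{H}\mathbf{u}}{2M+1},\qquad
\frac{\mathbf{u}^{H}\mathcal{H}_{\mathrm{pol}}\mathbf{H}_{\mathrm{pol}}^{H}\mathbf{u}}{2M+1},\qquad
\frac{\mathbf{u}^{H}\mathcal{G}_{\mathrm{pol}}\mathcal{G}_{\mathrm{pol}}^{H}\mathbf{u}}{\lambda_{\max}(2M+1)^{2}}
\]
antenna by antenna using the closed forms of Proposition~\ref{prop:Taylor}, and then sum over $m$. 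The first reduction is to simplify each $\mathcal{G}_{m}\mathbf{H}_{m}^{H}$ and $\mathcal{H}_{m}\mathbf{H}_{m}^{H}$ using the identities $\mathbf{H}_{m}(\mathbf{r}_{0})=h_{m}\mathbf{P}_{m}^{\perp}$, $\mathbf{P}_{m}^{\perp}\mathbf{r}_{m,0}=\mathbf{0}$ and $\mathbf{P}_{m}^{\perp}\mathbf{P}_{m}=\mathbf{0}$; these kill most of the cross-terms in \eqref{eq:defGm(DeltaT)}--\eqref{eq:defHm(DeltaT)} and leave expressions built only from $\mathbf{r}_{m,0}\mathbf{r}_{m,0}^{T}/\|\mathbf{r}_{m,0}\|^{2}$ and $\mathbf{P}_{m}^{\perp}$.

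Next I would specialize to $x_{0}=0$. Then $\mathbf{r}_{m,0}=(0,y_{0}-m\Delta_{T},z_{0})^{T}$ has vanishing first component, so $\mathbf{P}_{m}(\mathbf{r}_{0})$ has zero first row and column and $\mathbf{P}_{m}^{\perp}(\mathbf{r}_{0})$ is block diagonal of the form $\mathrm{diag}(1,\tilde{\mathbf{P}}_{m}^{\perp})$ with $\tilde{\mathbf{P}}_{m}^{\perp}$ a $2\times 2$ projection in the $yz$-plane. This block structure propagates to $\mathbf{H}_{\mathrm{pol}}\mathbf{H}_{\mathrm{pol}}^{H}/(2M+1)$, so the dominant left singular vector $\mathbf{u}$ is either $[1,0,0]^{T}$ or supported entirely in the last two entries. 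Either way, the reflection $\Delta_{x}\mapsto -\Delta_{x}$ leaves the geometry invariant, which immediately forces any $\Delta_{x}$-linear contribution to vanish (hence $\mathfrak{m}_{M}$ has a zero first entry) and any mixed $\Delta_{x}\Delta_{y}$, $\Delta_{x}\Delta_{z}$ contribution to vanish as well (hence the block decomposition in \eqref{eq:defcalM}).

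With these reductions I would execute the antenna sum explicitly. Every factor of $\|\mathbf{r}_{m,0}\|^{-k}$ is collected into $s_{M}^{(k)}$, while terms carrying the $y$-component of $\mathbf{r}_{m,0}$ produce $\bar{s}_{M}^{(k)}$. The identity $(y_{0}-m\Delta_{T})^{2}+z_{0}^{2}=\|\mathbf{r}_{m,0}\|^{2}$ is used repeatedly to reduce any second-power numerators. Real and imaginary parts are separated by grouping the factors $(1+\mathrm{j}(2\pi/\lambda)\|\mathbf{r}_{m,0}\|)$ and their squares coming from the first-order term in $\mathcal{G}_{m}$ and the third term of $\mathcal{H}_{m}$ in \eqref{eq:defHm(DeltaT)}; the imaginary pieces combine into the linear form $2\mathbf{\Delta}_{r}^{T}\mathfrak{m}_{M}$, while the real pieces produce the quadratic form $\mathbf{\Delta}_{r}^{T}\mathcal{M}_{M}\mathbf{\Delta}_{r}$, with the frequency-independent part assembled as $\mathcal{A}_{M}$ and the $(2\pi/\lambda)^{2}$-weighted part as $\mathcal{B}_{M}$.

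Finally, to establish $\gamma_{M}^{(1)}\geq 0$ I would apply Cauchy--Schwarz to $\bar{s}_{M}^{(3)}=\frac{1}{2M+1}\sum_{m}(m\Delta_{T}-y_{0})\|\mathbf{r}_{m,0}\|^{-4}$ with factors $a_{m}=\|\mathbf{r}_{m,0}\|^{-2}$ and $b_{m}=(m\Delta_{T}-y_{0})\|\mathbf{r}_{m,0}\|^{-2}$, obtaining
\[
(\bar{s}_{M}^{(3)})^{2}\leq s_{M}^{(4)}\cdot\frac{1}{2M+1}\sum_{m=-M}^{M}\frac{(m\Delta_{T}-y_{0})^{2}}{\|\mathbf{r}_{m,0}\|^{4}}=s_{M}^{(4)}\bigl(s_{M}^{(2)}-z_{0}^{2}s_{M}^{(4)}\bigr),
\]
so that $(\bar{s}_{M}^{(3)})^{2}+z_{0}^{2}(s_{M}^{(4)})^{2}\leq s_{M}^{(2)}s_{M}^{(4)}\leq 3\,s_{M}^{(2)}s_{M}^{(4)}$, which is the claimed inequality. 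The main obstacle is the bookkeeping in the third step: one must correctly combine the Hessian contributions with the squared-gradient term (the last line of \eqref{eq:ast}), verify that all imaginary cross-terms cancel so that a real quadric emerges, and check that the two candidate orientations of $\mathbf{u}$ induced by the $x_{0}=0$ block structure yield a common expression for $\mathfrak{m}_{M}$ and $\mathcal{M}_{M}$.
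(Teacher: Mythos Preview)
Your overall strategy matches the paper's: work antenna by antenna with the explicit forms from Proposition~\ref{prop:Taylor}, specialize to $x_{0}=0$, and collect into the sums $s_{M}^{(k)}$ and $\bar{s}_{M}^{(k)}$. However, there is one genuine gap. You correctly observe that the block structure of $\mathbf{P}_{m}^{\perp}(\mathbf{r}_{0})$ forces $\mathbf{u}$ to be either $\mathbf{e}_{1}$ or supported in the $yz$-block, but you then treat these alternatives as interchangeable (``check that the two candidate orientations\ldots yield a common expression''). They do not: every explicit formula in the statement---$\mathfrak{m}_{M}^{(2)}$, $\gamma_{M}^{(1)}$, $\mathcal{A}_{M}$, $\mathcal{B}_{M}$---as well as the normalization $\lambda_{\max}=|\xi/\lambda|^{2}s_{M}^{(2)}$, relies on $\mathbf{u}=\mathbf{e}_{1}$ specifically, through the identities $\mathbf{P}_{m}^{\perp}(\mathbf{r}_{0})\mathbf{e}_{1}=\mathbf{e}_{1}$ and $\mathbf{P}_{m}(\mathbf{r}_{0})\mathbf{e}_{1}=\mathbf{0}$. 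The paper invokes $\mathbf{u}=\mathbf{e}_{1}$ as a separate fact (proved elsewhere for $t_{\mathrm{pol}}=2,3$; trivial for $t_{\mathrm{pol}}=1$), and you need it too: your symmetry argument under $\Delta_{x}\mapsto-\Delta_{x}$ gives only the block \emph{shape} of $\mathcal{M}_{M}$, not its entries.

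A smaller point: your sentence ``the imaginary pieces combine into the linear form $2\mathbf{\Delta}_{r}^{T}\mathfrak{m}_{M}$'' is backwards. The linear term is $2\operatorname{Re}\!\bigl[\mathbf{u}^{H}\mathcal{G}_{\mathrm{pol}}\mathbf{H}_{\mathrm{pol}}^{H}\mathbf{u}\bigr]/(2M+1)$, so it is the \emph{real} part of the gradient contribution that survives; the imaginary pieces cancel upon adding the Hermitian conjugate. In the paper's notation this reads $\mathfrak{m}_{M}=-(s_{M}^{(2)})^{-1}\operatorname{Re}[\mathbb{G}]\mathbf{u}$. Likewise, the $(2\pi/\lambda)^{2}$ part of the quadratic form comes from the squared-gradient term $\mathcal{G}^{H}\mathcal{G}$ and from the real part of $(1+\mathrm{j}\tfrac{2\pi}{\lambda}\|\mathbf{r}_{m,0}\|)^{2}$ in $\mathbf{\Xi}_{3,0}$, not from ``real pieces'' alone. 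This does not change the final answer but indicates the bookkeeping you flag as the main obstacle has not yet been carried through.

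Your Cauchy--Schwarz argument for $\gamma_{M}^{(1)}\geq 0$ is in fact tidier than the paper's: you obtain $(\bar{s}_{M}^{(3)})^{2}+z_{0}^{2}(s_{M}^{(4)})^{2}\leq s_{M}^{(2)}s_{M}^{(4)}$ in a single step, whereas the paper chains two Cauchy--Schwarz bounds (one passing through $s_{M}^{(6)}$) to reach the same conclusion.
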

\begin{figure*}
    \begin{align}
\mathcal{A}_{M} &  =\frac{1}{\left(  s_{M}^{(2)}\right)  ^{2}}\left[
\begin{array}
[c]{cc}
\left(  3z_{0}^{2}s_{M}^{(6)}-2s_{M}^{(4)}\right)  s_{M}^{(2)}-\left(  \bar
{s}_{M}^{(3)}\right)  ^{2} & \left(  s_{M}^{(4)}\bar{s}_{M}^{(3)}+3s_{M}
^{(2)}\bar{s}_{M}^{(5)}\right)  z_{0}\\
\left(  s_{M}^{(4)}\bar{s}_{M}^{(3)}+3\bar{s}_{M}^{(5)}s_{M}^{(2)}\right)
z_{0} & \left(  s_{M}^{(4)}-3s_{M}^{(6)}z_{0}^{2}\right)  s_{M}^{(2)}-\left(
s_{M}^{(4)}z_{0}\right)  ^{2}
\end{array}
\right]  \label{eq:def_matrixAM} \\
\mathcal{B}_{M} &  =\frac{1}{\left(  s_{M}^{(2)}\right)  ^{2}}\left[
\begin{array}
[c]{cc}
\left(  s_{M}^{(2)}\right)  ^{2}-z_{0}^{2}s_{M}^{(2)}s_{M}^{(4)}-\left(
\bar{s}_{M}^{(2)}\right)  ^{2} & \left(  \bar{s}_{M}^{(2)}s_{M}^{(3)}-\bar
{s}_{M}^{(3)}s_{M}^{(2)}\right)  z_{0}\\
\left(  s_{M}^{(3)}\bar{s}_{M}^{(2)}-\bar{s}_{M}^{(3)}s_{M}^{(2)}\right)
z_{0} & z_{0}^{2}\left(  s_{M}^{(4)}s_{M}^{(2)}-\left(  s_{M}^{(3)}\right)
^{2}\right)
\end{array}
\right]  \label{eq:def_matrixBM}
\end{align}
\hrule
\end{figure*}
\begin{proof}
See Appendix~\ref{sec:AppendixQuadric}.
\end{proof}

If we disregard the high order error terms, which decay as $O(\left\Vert
\mathbf{\Delta}_{r}\right\Vert ^{3})$, the set of points for which the
$\mathsf{SNR}(\mathbf{r})=\kappa\mathsf{SNR}(\mathbf{r}_{0})$ where $\kappa
\in\left(  0,1\right)  $ is asymptotically described by the equation
\begin{equation} \label{eq:quarticFiniteM}
1-\kappa=2\mathbf{\Delta}_{r}^{T}\mathfrak{m}_{M}+\mathbf{\Delta}_{r}
^{T}\mathcal{M}_{M}\mathbf{\Delta}_{r}
\end{equation}
or alternatively as
\[
\left[
\begin{array}
[c]{cc}
\mathbf{\Delta}_{r} & 1
\end{array}
\right]  \left[
\begin{array}
[c]{cc}
\mathcal{M}_{M} & \mathfrak{m}_{M}\\
\mathfrak{m}_{M}^{T} & -(1-\kappa)
\end{array}
\right]  \left[
\begin{array}
[c]{c}
\mathbf{\Delta}_{r}\\
1
\end{array}
\right]  =0
\]
which is the equation of a quadric surface \cite{Coolidge68}. Disregarding all degenerated situations,
this quadric surface can either be an ellipsoid or an hyperboloid (of one or two sheets). The beamfocusing feasibility region will therefore be associated to the region of points in space for which this quadric is an ellipsoid, which is the only case where the enclosed region has with finite volume.
This situation occurs if and only if the three eigenvalues of $\mathcal{M}
_{M}$ are positive. 

Given the structure of $\mathfrak{m}_{M}$ and
$\mathcal{M}_{M}$, we see that the ellipsoid will always be symmetric with respect to the
$yz$-plane (where the receiver is assumed to be located). The length of the three semi-axes, which we denote as $l_{M}
^{(1)}\left(  \mathbf{r}_{0}\right)  $ (along the $x$-axis), $l_{M}
^{(2)}\left(  \mathbf{r}_{0}\right)  $ and $l_{M}^{(3)}\left(  \mathbf{r}
_{0}\right)  $ (on the $yz$-plane) can be expressed in closed form. Indeed, if
we denote as $\gamma_{M}^{(1)}\geq\gamma_{M}^{(2)}\geq\gamma_{M}^{(3)}>0$ the
three eigenvalues of $\mathcal{M}$ (so that $\gamma_{M}^{(1)}$\ is as defined
in (\ref{eq:defgamma1}) and $\gamma_{M}^{(2)},\gamma_{M}^{(3)}$ are the
eigenvalues of $\mathcal{M}_{M}^{(2)}$), we can express
\begin{equation}
l_{M}^{(k)}\left(  \mathbf{r}_{0}\right)  =\sqrt{\frac{\mu_{M}}{\gamma
_{M}^{(k)}}}\label{eq:deflM(k)}
\end{equation}
where
\begin{align*}
\mu_{M} & =\frac{-1}{\det\mathcal{M}_{M}}\det\left[
\begin{array}
[c]{cc}
\mathcal{M}_{M} & \mathfrak{m}_{M}\\
\mathfrak{m}_{M}^{T} & \kappa-1
\end{array}
\right]  \\ & =1-\kappa+\left(  \mathfrak{m}_{M}^{(2)}\right)  ^{T}\left(
\mathcal{M}_{M}^{(2)}\right)  ^{-1}\mathfrak{m}_{M}^{(2)}.
\end{align*}
In particular, the volume enclosed by the ellipsoid where the achieved SNR is
at least $\kappa\mathsf{SNR}(\mathbf{r}_{0})$ can be expressed as
\[
\mathcal{V}_{M}\left(  \mathbf{r}_{0}\right)  =\frac{4\pi}{3}l_{M}
^{(1)}\left(  \mathbf{r}_{0}\right)  l_{M}^{(2)}\left(  \mathbf{r}_{0}\right)
l_{M}^{(3)}\left(  \mathbf{r}_{0}\right)  =\frac{4\pi}{3}\sqrt{\frac{\mu
_{M}^{3}}{\gamma_{M}^{(1)}\gamma_{M}^{(2)}\gamma_{M}^{(3)}}}
\]
which effectively goes to infinity as $\mathcal{M}_{M}^{(2)}$ becomes
singular. The center of the ellipsoid is located at the point $\mathbf{r}
_{0}-\mathcal{M}_{M}^{-1}\mathfrak{m}_{M}$ \ of the three dimensional space
(equivalently at the point $[y_{0},z_{0}]^{T}-(\mathcal{M}_{M}^{(2)}
)^{-1}\mathfrak{m}_{M}^{(2)}$ of the $yz$-plane).

All these parameters allows us to establish the beamfocusing feasibility region 
(that is the region where the quadric is an ellipsoid) and the local effectiveness of 
beamfocusing operation (given by the volume of the corresponding ellipsoid for any fixed $\kappa$). From the analytical perspective, though, it is difficult to gain much insight from the above
expressions, since the different terms need to be evaluated for each fixed $M$
and $\Delta_{T}$. In the next section we consider the holographic regime
whereby the number of elements of the ULA increases to infinity
($M\rightarrow\infty$) while the distance between consecutive elements
converges to zero at the same rate ($\Delta_{T}\rightarrow0$)\ so that the
total dimension of the aperture converges to $(2M+1)\Delta_{T}\rightarrow2L>0$. 
It can be seen that the approximation is quite accurate even in situations
where the interelement separation is relatively high, as will be shown in the
numerical evaluation study below.

\section{The holographic regime} \label{sec:holographic}

The above analysis can be simplified by considering the holographic regime,
according to which we allow $M\rightarrow\infty$ and $\Delta_{T}\rightarrow0$
so that $M\Delta_{T}\rightarrow L$, $0<L<\infty$. In these circumstances, one
can easily see that $s_{M}^{(k)}\rightarrow\chi_{k}$ and $\bar{s}_{M}^{(k)}\rightarrow\bar{\chi}_{k}$, where the quantities ${\chi}_{k}$ and $\bar{\chi}_{k}$ are defined as follows. On the one hand, for $k\in\mathbb{N}$ the quantities $\bar{\chi}_{k}$ are given by 
\begin{multline*}
\bar{\chi}_{k} = \frac{1}{2L(k-1)}\big[ \left((L + y_0)^2+ z_0^2\right)^{-(k-1)/2} \\ - \left((L - y_0)^2+ z_0^2\right)^{-(k-1)/2} \big]    
\end{multline*}
On the other hand, the quantities $\chi_{k}$ for $k=2,3,4,6$ are defined as  \cite{agustin24twc}:
\begin{align*}
\chi_{2}  &  =\frac{1}{2Lz_{0}}\left[  \arctan\frac{L-y_{0}}{z_{0}}
+\arctan\frac{L+y_{0}}{z_{0}}\right] \\
\chi_{3}  &  =\frac{1}{2Lz_{0}^{2}}\left[  \frac{L-y_{0}}{\sqrt{{\left(
L-y_{0}\right)  ^{2}+z_{0}^{2}}}}+\frac{L+y_{0}}{\sqrt{ \left(
y_{0}+L\right)  ^{2}+z_{0}^{2}}}\right] \\
\chi_{4}  &  =\frac{1}{4Lz_{0}^{2}}\left[  \frac{L-y_{0}}{  \left(
L-y_{0}\right)  ^{2}+z_{0}^{2} }+\frac{L+y_{0}}{  \left(
L+y_{0}\right)  ^{2}+z_{0}^{2} }\right]  +\frac{\chi_{2}}{2z_{0}^{2}
}\\
\chi_{6}  &  =-\frac{1}{8Lz_{0}^{4}}\left[  \frac{\left(
L-y_{0}\right)  ^{3}}{\left(  \left(  L-y_{0}\right)  ^{2}+z_{0}^{2}\right)
^{2}}+\frac{\left(  L+y_{0}\right)  ^{3}}{\left(  \left(  L+y_{0}\right)
^{2}+z_{0}^{2}\right)  ^{2}}\right] \\ & +\frac{5\chi_{4}}{4z_{0}^{2}}-\frac
{\chi_{2}}{4z_{0}^{4}}.
\end{align*}
This implies that $\mathfrak{m}_{M}$ in Proposition~\ref{prop:quadric} will converge to $\mathfrak{m}$ given by
$\mathfrak{m}=\left[  0,\mathfrak{m}_{2}\right]  ^{T}$ where
\begin{equation}
\mathfrak{m}_{2}=\frac{1}{\chi_{2}}\left[
\begin{array}
[c]{c}
-\bar{\chi}_{3}\\
\chi_{4}z_{0}
\end{array}
\right]. \label{eq;defm2}
\end{equation}
On the other hand, $\gamma_{M}^{(1)}$ will converge to
\begin{equation}
\gamma_{1}=\frac{3\chi_{2}\chi_{4}-\bar{\chi}_{3}^{2}-z_{0}^{2}\chi_{4}^{2}
}{\chi_{2}^{2}}    \label{eq:gamma1}
\end{equation}
and $\mathcal{M}_{M}^{(2)}$ will converge to $\mathcal{M}_{2}$ given by
\begin{align}
& \mathcal{M}_{2} = \label{eq;defM2}\\
&  \frac{1}{\chi_{2}^{2}}\left[
\begin{array}
[c]{cc}
(3\chi_{6}z_{0}^{2}-2\chi_{4})\chi_{2}-\bar{\chi}_{3}^{2} & \left(
3\chi_{2}\bar{\chi}_{5}+\chi_{4}\bar{\chi}_{3}\right)  z_{0}\\
\left(  3\chi_{2}\bar{\chi}_{5}+\chi_{4}\bar{\chi}_{3}\right)  z_{0} &
\chi_{2}\chi_{4}-z_{0}^{2}\left(  \chi_{4}^{2}+3\chi_{2}\chi_{6}\right)
\end{array}
\right] \nonumber \\
&  +\left(  \frac{2\pi}{\lambda}\right)  ^{2}\frac{1}{\chi_{2}^{2}}\left[
\begin{array}
[c]{cc}
\chi_{2}^{2}-z_{0}^{2}\chi_{2}\chi_{4}-\bar{\chi}_{2}^{2} & \left(  \bar{\chi
}_{2}\chi_{3}-\chi_{2}\bar{\chi}_{3}\right)  z_{0}\\
\left(  \bar{\chi}_{2}\chi_{3}-\chi_{2}\bar{\chi}_{3}\right)  z_{0} & \left(
\chi_{2}\chi_{4}-\chi_{3}^{2}\right)  z_{0}^{2}
\end{array}
\right]. \nonumber
\end{align}
The first term on the right hand side of (\ref{eq:SNRtaylorx0=0}) will also
converge in the holographic regime to a quantity obtained by replacing
$\mathfrak{m}_{M}$ and $\mathcal{M}_{M}$ with $\mathfrak{m}$ and $\mathcal{M}$
respectively. Observe also that, since $\left\vert \epsilon_{M}\right\vert
\leq K\left\Vert \mathbf{\Delta}_{r}\right\Vert ^{3}$ for a certain nice
constant $K$, we will also have $\lim\sup_{M\rightarrow\infty}\left\vert
\epsilon_{M}\right\vert \leq K\left\Vert \mathbf{\Delta}_{r}\right\Vert ^{3}$
and the error term will also be of order $O(\left\Vert \mathbf{\Delta}
_{r}\right\Vert ^{3})$ in the holographic regime.

Noting that $\gamma_{1}>0$ one can conclude that the region where
beamfocusing is feasible can be identified with the region of space where the
two eigenvalues of $\mathcal{M}_{2}$ are positive, i.e. $\mathcal{M}_{2}>0$.
Here again, this corresponds to the corresponding quadric surface being an
ellipsoid symmetric with respect to the $yz$-plane. The three semiaxes of the
ellipsoid will now be given by the limits of $l_{M}^{(k)}$, $k=1,2,3$, in
(\ref{eq:deflM(k)}). In other words, we will have $l_{M}^{(k)}\rightarrow
l_{k}=\sqrt{\mu/\gamma_{k}}$ where $\gamma_{k}$, $k=1,2,3$, are the
eigenvalues of $\mathcal{M}$ and where $\mu=1-\kappa+\mathfrak{m}_{2}
^{T}\mathcal{M}_{2}^{-1}\mathfrak{m}_{2}$ with $\mathfrak{m}_{2}$ and
$\mathcal{M}_{2}$ as defined in (\ref{eq;defm2})-(\ref{eq;defM2}). The
ellipsoid itself will be centered at the point $[y_{0},z_{0}]^{T}
-\mathcal{M}_{2}^{-1}\mathfrak{m}_{2}$ of the $yz$-plane.

\begin{remark} \label{rem:decayment_xi}
It can be seen that $D^{k}\chi_{k}$ and $D^{k}\bar{\chi}_{k}$ are functions of
$\varrho=L/D$ and $\theta$ and no other variable, where we recall that $\theta$ denotes the elevation angle of the intended receiver and $D$ is the distance between the intended receiver and the center of the ULA (see Fig.~\ref{fig:Scenario}). 
This implies that $D^{2}\gamma_{1}$ and $D\mathfrak{m}_{2}$ are also a function of $\varrho$ and $\theta$\ only,
whereas $D^{2}\mathcal{M}_{2}$ can be expressed as a function of $\varrho$,
$\theta$ and $L/\lambda$. In particular, we observe that the eigenvalues of
$D^{2}\mathcal{M}_{2}$ are also functions of these three variables only. We can
therefore establish the region where beamfocusing is possible as the region
where the three eigenvalues of this matrix are positive, which can easily be
evaluated for any fixed value of $L/\lambda$. 
\end{remark}
Indeed, by considering a range of values of the elevation $\theta$ in the set $(-\pi /2,\pi/2)$ (see Fig.~\ref{fig:Scenario}), one can evaluate the minimum eigenvalue of $D^{2}\mathcal{M}_{2}$
as a function of $\varrho$ and establish the values of this quantity for which
the minimum eigenvalue is positive. The region where beamfocusing is feasible
will easily be established for each value of $L/\lambda$ under consideration.
This is illustrated in Fig.~\ref{fig:beamfocusingRegion}, were we represent the beamfocusing feasibility
regions for different values of this parameter. The boundary of the feasibility region is numerically obtained by evaluating the minimum eigenvalue of $D^2\mathcal{M}_2$ as a function of $\varrho = L/D$ and finding the values of $\varrho$ where this eigenvalue becomes zero. In general terms, it can be
observed that depending on the value of the elevation $\theta$, the
feasibility region is either empty or consists of a single interval of values
in $\varrho$.    
\begin{figure}
    \centering
    \includegraphics[width=\linewidth]{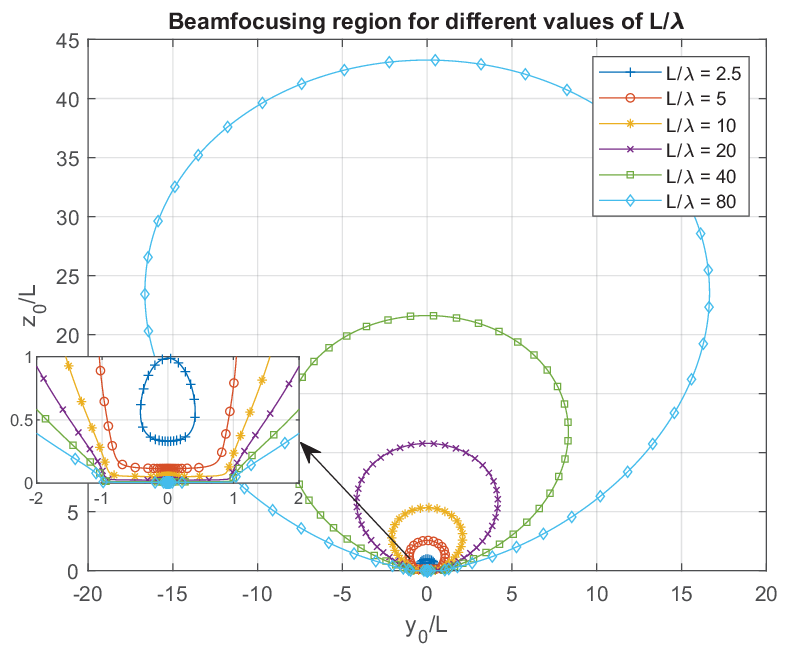}
    \caption{Spatial regions where beamfocusing is feasible for different values of the array size relative to the wavelength in the holographic regime ($L/\lambda$).}
    \label{fig:beamfocusingRegion}
\end{figure}

The quotient between the semiaxes of the ellipsoid and the
distance $D$ (that is $l_{k}/D$) will also be a function of $\varrho,$
$\theta$\ and $L/\lambda$ only. Furthermore, if we denote by $\mathcal{V}
\left(  \mathbf{r}_{0}\right)  $ the volume of the ellipsoid in the
holographic regime, that is
\[
\mathcal{V}\left(  \mathbf{r}_{0}\right)  =\frac{4\pi}{3}l_{1}l_{2}l_{3}
=\frac{4\pi}{3}\sqrt{\frac{\mu^{3}}{\gamma_{1}\gamma_{2}\gamma_{3}}}
\]
we have that $\mathcal{V}\left(  \mathbf{r}_{0}\right)  /D^{3}$ is a function
of $\varrho,$ $\theta$\ and $L/\lambda$ only. This quantity provides a
quantitative measure of the efficiency of beamfocusing within the feasibility
region, see further Fig.\ref{fig:volEllipsoid}.
\begin{figure}
    \centering
    \includegraphics[width=\linewidth]{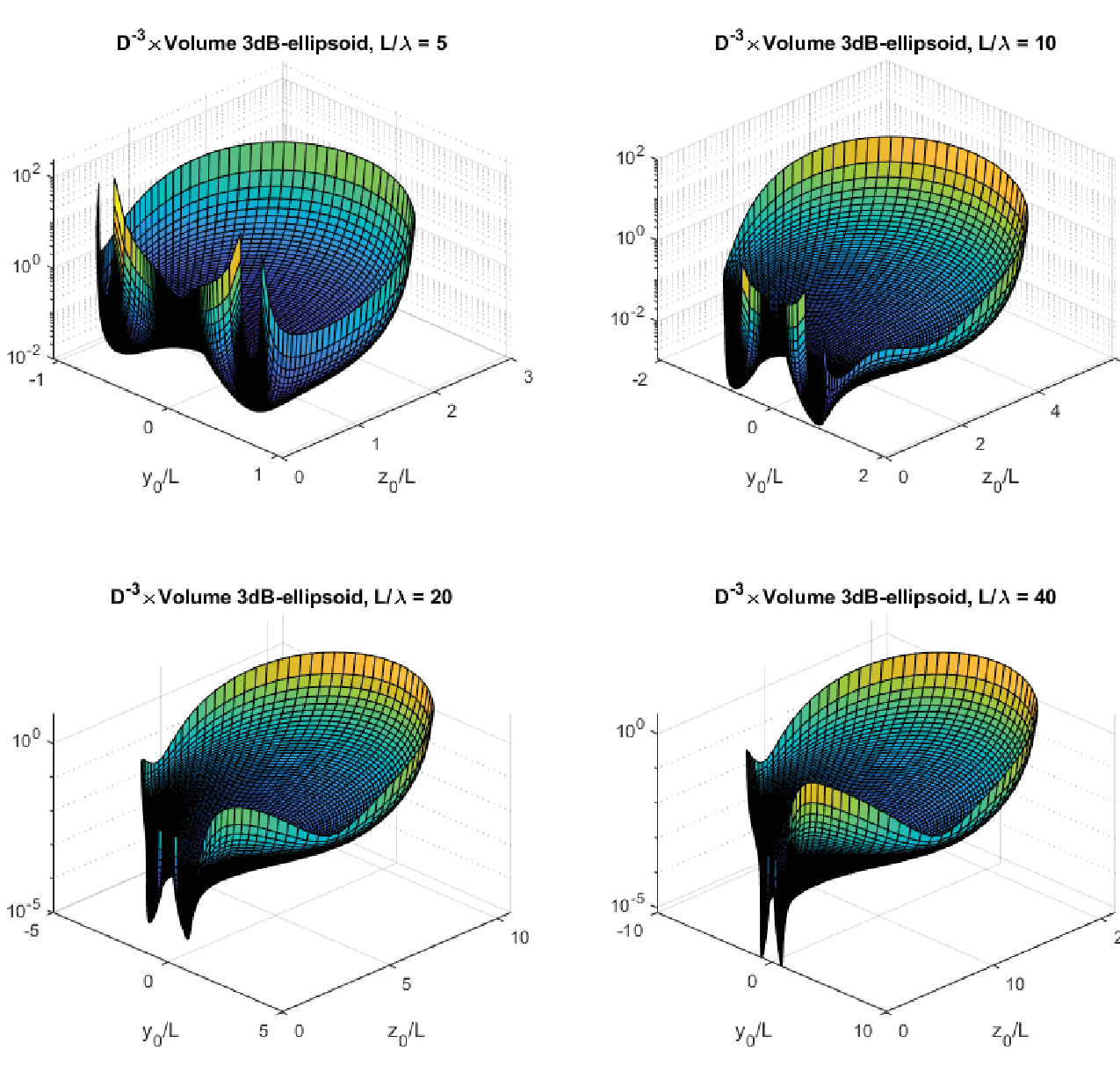}
    \caption{Volume of the asymptotical ellipsoid that characterizes the 3dB loss with respect to the maximum SNR ($\kappa = 0.5$) for different values of $L/\lambda$.}
    \label{fig:volEllipsoid}
\end{figure}

The above description in the holographic regime provides an analytical
framework independent of $M$ that can be used to establish the regions of
beamfocusing and the corresponding $\kappa$-efficiency as a function of
position of the intended receiver. In the following two subsections, we try to
provide some more insight by focusing on two special cases. We will first
consider the case where the intended receiver is located at the broadside of
the array, so that $y_{0}=0$. Next, we will provide an approximation of the
above quantities for reasonably high values of $D/L\,$, where we recall that $D=\sqrt
{y_{0}^{2}+z_{0}^{2}}$ is the distance between the intended receiver and the
center of the ULA and $L$ is the one-sided dimension of the aperture.

\subsection{Case $y_{0}=0$ (intended receiver on the broadside)}

When $y_{0}=0$ we can simplify the above expressions by noting that $\bar
{\chi}_{k}=0$ for all $k$, whereas $\chi_{2} =(Lz_{0})^{-1}\arctan(L/z_0)$ and similar simplifications are obtained for $\chi_{3},\chi_{4},\chi_{6}$.
From this, we find that $\gamma_{1}$ particularizes to
\[
{\gamma}_{1}=\frac{1}{4z_{0}^{2}\chi_{2}^{2}}\left(  5\chi_{2}-\frac
{1}{\left(  L^{2}+z_{0}^{2}\right)  }\right)  \left(  \chi_{2}+\frac
{1}{\left(  L^{2}+z_{0}^{2}\right)  }\right)  .
\]
On the other hand, $\mathcal{M}_{2}$ becomes diagonal, with diagonal entries
that we will denote $\gamma_{2}$, $\gamma_{3}$ given by
\begin{align*}
\gamma_{2} &  =\frac{1}{8\chi_{2}z_{0}^{2}}\left(  \frac
{L^{2}+7z_{0}^{2}}{\left(  L^{2}+z_{0}^{2}\right)  ^{2}}+\chi_{2}\right)
\\ &+\frac{1}{2}\left(  \frac{2\pi}{\lambda}\right)  ^{2}\left(  1-\frac{1}
{\chi_{2}\left(  L^{2}+z_{0}^{2}\right)  }\right)  \\
\gamma_{3} &  =-\frac{1}{\chi_{2}^{2}}\frac{1}{8z_{0}^{2}}\left(  7\chi
_{2}^{2}+3\frac{3L^{2}+5z_{0}^{2}}{\left(  L^{2}+z_{0}^{2}\right)  ^{2}}
\chi_{2}+\frac{2}{\left(  L^{2}+z_{0}^{2}\right)  ^{2}}\right)  \\
&  +\left(  \frac{2\pi}{\lambda}\right)  ^{2}\frac{1}{2\chi_{2}^{2}}\left(
\chi_{2}^{2}+\frac{\chi_{2}}{\left(  L^{2}+z_{0}^{2}\right)  }-\frac{1}
{z_{0}^{2}}\frac{2}{\left(  L^{2}+z_{0}^{2}\right)  }\right)  .
\end{align*}
The first two eigenvalues of $\mathcal{M}$ (i.e. $\gamma_{1}$, $\gamma_{2}$) are positive, since
\[
\chi_{2}\left(  L^{2}+z_{0}^{2}\right)  =\left(  1+\frac{z_{0}^{2}}{L^{2}
}\right)  \frac{L}{z_{0}}\arctan\frac{L}{z_{0}}>1
\]
so that beamfocusing will be feasible whenever $\gamma_{3}>0$. Defining $\varphi
_{2}\left(  \varrho\right)  =\varrho^{-1}\arctan\varrho$ where $\varrho=L/D$,
one can characterize the beamfocusing feasibility segment as the set of points for which $\gamma_{3}>0,$ a
condition that can be expressed as
\begin{equation}
\frac{\varrho}{2\pi }\sqrt{\frac{7\left(
1+\varrho^{2}\right)  ^{2}\varphi_{2}^{2}\left(  \varrho\right)  +3\left(
5+3\varrho^{2}\right)  \varphi_{2}\left(  \varrho\right)  +2}{ 4\left(  1+\varrho^{2}\right) \left(\varphi_{2}
^{2}\left(  \varrho\right)  \left(  1+\varrho^{2}\right)  +\varphi_{2}\left(
\varrho\right)  -2\right)}} \leq \frac{ L}{\lambda}
.\label{eq:beamfocusBroadside}
\end{equation}
The left hand side is a continous function of $\varrho=L/D$ that presents a
single inflection point (local minima) at $\varrho=1.72776$ where it takes the
value $2.2048$. This is illustrated in  Fig.~\ref{fig:beamfocusingBroadside}, where we represent the left hand side of (\ref{eq:beamfocusBroadside}) as a function of $\varrho^{-1} = D/L$. The region of
feasibility of beamfocusing can be described as the region between the two
roots of (\ref{eq:beamfocusBroadside}), which can be obtained by considering the intersection of the blue solid line in Fig.~\ref{fig:beamfocusingBroadside} with a horizontal line at $L/\lambda$. The inflection point implies that beamfocusing along the
broadside of the array is only possible when $L>2.2048\lambda$. 
\begin{figure}
    \centering
    \includegraphics[width=\linewidth]{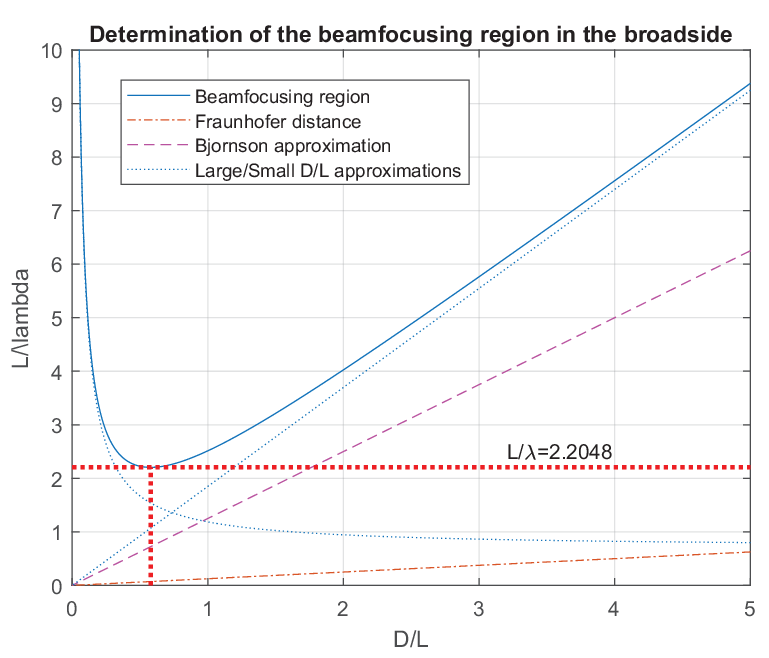}
    \caption{Representation of the function on the left hand side of (\ref{eq:beamfocusBroadside}) as a function of $\varrho^{-1} = D/L$. The feasibility segment for beamfocusing on the broadside is obtained by selecting the interval where this curve is lower than $L/\lambda$. Additionally, we represent the two lines that determine the Fraunhofer distance by the same procedure (red dash-dotted line) as well as the beamfocusing radius approximation in \cite[Theorem 1]{Bjornson2021primer}, consisting of one tenth of the Fraunhofer distance (magenta dashed line). 
    }
    \label{fig:beamfocusingBroadside}
\end{figure}
We can gain some insight into the beamfocusing feasibility region by considering a large/small-$\varrho$ approximation of the left hand side of (\ref{eq:beamfocusBroadside}). Indeed, for large values of $\varrho$ the left hand side of (\ref{eq:beamfocusBroadside}) can be expanded as
\[
\frac{3\sqrt{15}}{2\pi \varrho} +o(1)
\]
whereas for small values of $\varrho$ it can be approximated as 
\[
\frac{1}{4}\sqrt{\frac{7}{\pi^2-8}}\left(\varrho + \frac{\pi^2+20}{14\pi\left(\pi^2-8\right)}\right) + o(1).
\]
These two approximations are shown in Fig.~\ref{fig:beamfocusingBroadside} in blue dotted lines. 
Using these approximations, we can establish that for large values of $D/L$, the maximum distance that guarantees the feasibility of beamfocusing can be approximated as $D_{\max} \approx 2\pi L^2/(3\lambda\sqrt{15}) = \pi/(12\sqrt{15})D_\mathrm{Fraun}$, where $D_\mathrm{Fraun} = 8L^2/\lambda$ is the Fraunhofer distance. For small values of  $D/L$ the minimum distance can be approximated as
\[
D_{\min} \approx {L} \left( \frac{4L}{\lambda}\sqrt{\frac{\pi^2-8}{7}} - \frac{\pi^2+20}{14\pi(\pi^2-8)} \right)^{-1}
\]
which becomes proportional to $\lambda$ for large apertures. This is quite often within the reactive region of the electromagnetic field (i.e. below the Fresnel distance). 

\subsection{Approximation for high $D/L$}

We now turn to the more interesting case where the receiver is not necessarily located in the broadside of the array, so that the angle $\theta$ in Fig.~\ref{fig:Scenario} can take any value in the set $(-\pi/2,\pi/2)$. In particular, we will try to gain some insights into the beamfocusing achievability region and the actual shape of the corresponding achievability ellipses when the distance $D$ is large compared to the ULA aperture side length $L$.

As pointed out in Remark~\ref{rem:decayment_xi}, both $D^{k}\chi_{k}$ and $D^{k}\bar{\chi}_{k}$ can be expressed as analytic functions of
$\varrho = L/D$ and $\theta$ only. In particular, in order to gain some further insight into the behavior of the above description, one can consider the Taylor series approximation of all these functions around $\varrho =0$. 
The Taylor series of $D^k\chi_k$ for $k=2,4,6$ and $D^k\bar{\chi}_k$ for $k=5$ around $\varrho=L/D=0$ were presented in \cite[Appendix A]{mestre25ojsp}. Similarly, one can see that 
\begin{align*}
D^{3}\chi_{3}  &  =1+\left(  2-{5/2\cos^{2}\theta}\right)  \varrho
^{2}+O\left(  \varrho^{4}\right) \\
D^{2}\bar{\chi}_{2}  &  =-\sin\theta\left[  1+\left(  1-5/2\cos
^{2}\theta\right)  \varrho^{2} \right] +O\left(  \varrho^{4}\right) \\
D^{3}\bar{\chi}_{3}  &  =-\sin\theta\left[  1+\left(  2-4\cos^{2}
\theta\right)  \varrho^{2}\right]  +O\left(  \varrho^{4}\right).
\end{align*}
A direct application of these series in (\ref{eq:gamma1}) shows that
\begin{align*}
D^{2}\gamma_{1}  &  =\frac{\left(  5\cos^{2}\theta
-1\right)  \left(  \cos^{2}\theta+1\right)}{4\cos^{6}\theta} \\
&  +\frac{(2\cos^{2}\theta-1)(4\cos^{4}\theta-3\cos
^{2}\theta-3)}{6\cos^{8}\theta}\varrho^{2}+O\left(  \varrho^{4}\right).
\end{align*}
Regarding the matrix $\mathcal{M}_{2}$, it can be expanded as in (\ref{eq:expansionMcal}) at the top of the next page, where we have used the short-hand notation $c_\theta =\cos\theta$ and $s_\theta =\sin\theta$.
\begin{figure*}
\begin{multline}
D^{2}\mathcal{M}_{2}    =\left[
\begin{array}
[c]{cc}
-3+4{c^{2}_\theta} & -4{s_\theta}{c_\theta}\\
-4{s_\theta}{c_\theta} & 1-4{c^2_\theta}
\end{array}
\right] +\left(  \frac{2\pi L}{\lambda}\right)  ^{2}\frac{c^{2}_\theta}{3}\left[
\begin{array}
[c]{cc}
c^{2}_\theta & - s_\theta c_\theta\\
-s_\theta c_\theta & s^{2}_\theta
\end{array}
\right] \\   +\frac{1}{3}\left[
\begin{array}
[c]{cc}
-20+92c^{2}_\theta-76c^{4}_\theta & -s_\theta c_\theta \left(
46-76 c^{2}_\theta\right) \label{eq:expansionMcal} \\
-s_\theta c_\theta \left(  46-76 c^{2}_\theta\right)  & 7-76 c^{2}_\theta+76 c^{4}_\theta
\end{array}
\right]  \varrho^{2} +\left(  \frac{2\pi L}{\lambda}\right)  ^{2}\frac{ c^{2}_\theta}
{45}\left[
\begin{array}
[c]{cc}
c^{2}_\theta\left(  61-74 c^{2}_\theta\right(  & -s_\theta c_\theta\left(  45-74 c^{2}_\theta\right) \\
- s_\theta c_\theta\left(  45-74 c^{2}_\theta\right)  & 30-103 c^{2}_\theta+74 c^{4}_\theta
\end{array}
\right]  \varrho^{2}+O(\varrho^{4}) 
\end{multline}
    \hrule
\end{figure*}
The two eigenvalues of this matrix can in turn be asymptotically expressed as 
\begin{align*}
D^{2}\gamma_{2}  &  = 1+\frac{c^{2}_\theta}{3}\left(  \frac{2\pi L}{\lambda
}\right)  ^{2} +  \\ & +\left[  \frac{7-11 c^{2}_\theta}{3}+ \left(  \frac{2\pi L}{\lambda}\right)  ^{2} \frac{30-43 c^{2}_\theta
}{45} c^{2}_\theta\right]
\varrho^{2}+O(\varrho^{4})\\
D^{2}\gamma_{3}  &  =-3+\left[  \frac{-20+27 c^{2}_\theta}{3}+ \left(  \frac{2\pi L}{\lambda}\right)  ^{2} \frac{c^{4}_\theta}
{45} \right]
\varrho^{2}+O(\varrho^{4}).
\end{align*}
If we disregard the terms of order $O(\varrho^{4})$ in the above expansion for $D^{2}\gamma_{3}$, we can approximate the feasibility region of beamfocusing ($\gamma_3>0$) as the set of points such that
\begin{equation} \label{eq:beamfocusHighD}
\left(  \frac{D}{L}\right)  ^{2}<\frac{1}{9}\left[  -20+27\cos^{2}\theta
+\frac{\cos^{4}\theta}{15} \left(  \frac{2\pi L}{\lambda}\right)  ^{2}\right].
\end{equation}
Fig.~\ref{fig:beamfocusHighD} shows a comparison between the actual beamfocusing feasibility region in the holographic regime (blue solid lines) and the asymptotic approximation in (\ref{eq:beamfocusHighD}) (red dashed lines) for  different values of $L/\lambda$. It can be seen that the approximation is quite accurate for $L/\lambda \geq 10$, that is when the total dimension of the ULA is at least $20\lambda$, which corresponds to the situation where the feasibility region is sufficiently large in terms of $D/L$. 
\begin{figure}
    \centering
    \includegraphics[width=\linewidth]{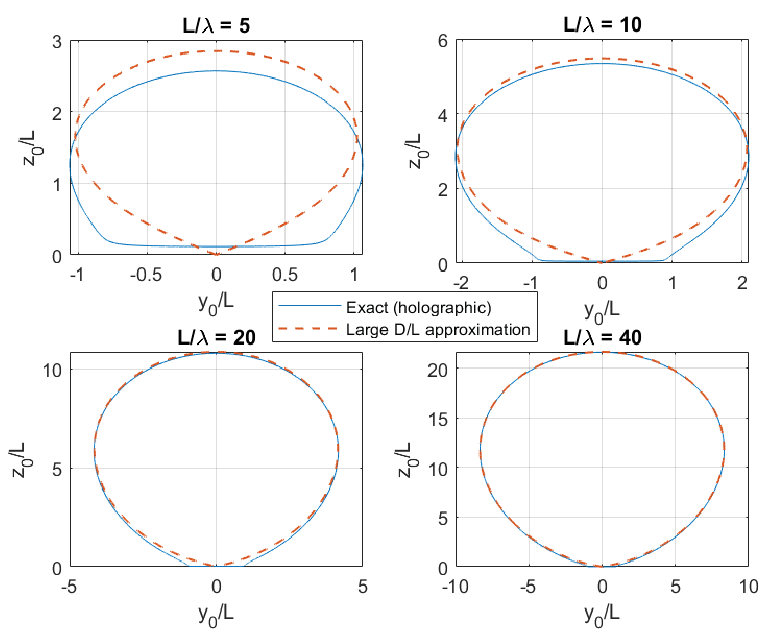}
    \caption{Comparision of the actual feasibility region for beamfocusing and the large $D/L$ approximation for different values of $L/\lambda$.}
    \label{fig:beamfocusHighD}
\end{figure}

Finally, we can characterize the behavior for large $D/L$ of the  three
semiaxes of the corresponding ellipsoid. To see that, we can first find an
expansion $D\mathfrak{m}_{2}$ as a function of $\varrho=L/D$, namely
\[
D\mathfrak{m}_{2}=\left[
\begin{array}
[c]{c}
\sin\theta\left[  1+\left(  1-\frac{8}{3}\cos^{2}\theta\right)  \varrho
^{2}\right] \\
\cos\theta\left[  1+\left(  \frac{7}{3}-\frac{8}{3}\cos^{2}\theta\right)
\varrho^{2}\right]
\end{array}
\right]  +O(\varrho^{4}).
\]
Using this and the expression of $D^{2}\mathcal{M}_{2}$ above one can obtain
\[
\mu=\left(  \frac{2}{3}-\kappa\right)  -\frac{1}{135}\left(  15 c^{2}_\theta -10+\frac{c^{4}_\theta}{3}\left(  \frac{2\pi L}{\lambda}\right)
^{2}\right)  \varrho^{2}+O(\varrho^{4}).
\]
From this, one can obtain the corresponding expansion of the three semi-axes of 
the ellipsoid, which can be used to describe the spatial efficiency of the beamfocusing mechanism. 

\section{Numerical Analysis}

In this section we provide a numerical study of the accuracy of the above asymptotic approximations in a practical setting. In order to justify the holographic approximations, we consider a rather standard situation where the ULA elements are separated by $\Delta_T = \lambda/2$. The transmit array consisted of $2M+1=101$ antennas, so that the total dimension of the array was $2L = 25\lambda$. To generate the actual channel, we assumed a frequency of operation of $3$~GHz, which corresponds to a wavelength equal to $0.1$~m. Fig.~\ref{fig:DeltaLambda2_near} to Fig.~\ref{fig:DeltaLambda2_off} represent the region of the $yz$-plane that achieves a SNR that is at least $3$~dB below the optimum one for different positions of the intended receiver (represented as markers). Apart from the actual region (represented in solid blue lines), we also plot the corresponding conic curves that are obtained by intersecting the corresponding quadric achievability surfaces with the $yz$-plane. More specifically, red dotted lines represent the achievability regions obtained by using the finite-$M$ approximation in Proposition~\ref{prop:quadric}, whereas green dash-dotted lines represent the corresponding holographic representation obtained as $M\rightarrow \infty$ and $\Delta_T\rightarrow 0$ while $M\Delta_T\rightarrow L$, as presented in Section~\ref{sec:holographic}. We also represent, in black dash-dotted line, the holographic approximation to the beamfocusing achievability region. 
\begin{figure}[t]
    \centering
    \includegraphics[width=\linewidth]{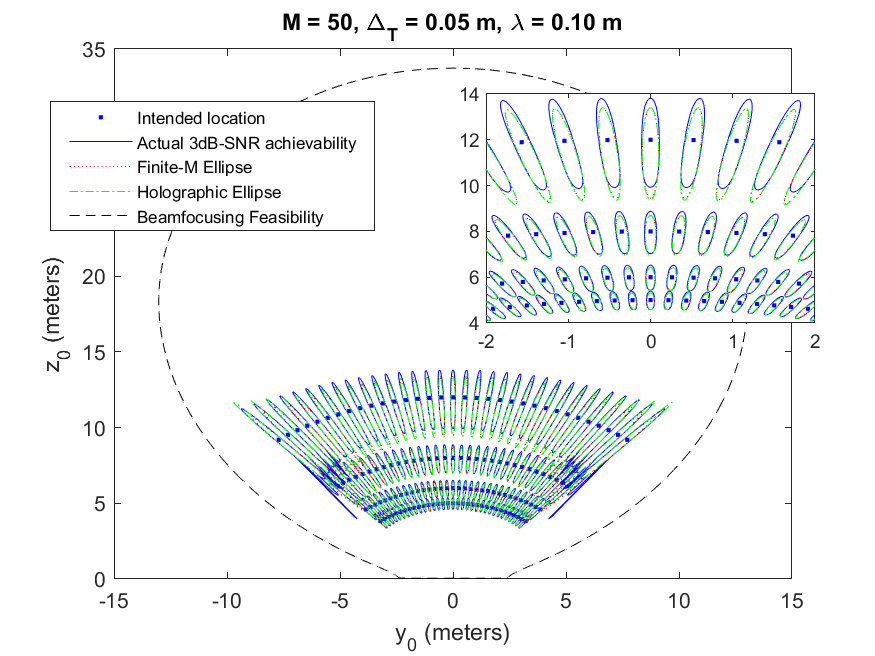}
    \caption{Representation of the region that achieves an SNR at least $3$dB below the optimum ($\kappa = 0.5$): true region (blue solid lines), finite-$M$ second order approximation (red dotted lines) and holographic approximation (green dash-dotted lines) for different intended locations (markers). }
    \label{fig:DeltaLambda2_near}
\end{figure}

Fig.~\ref{fig:DeltaLambda2_near} considers the situation where the intended receivers are relatively close to the transmit array. In this situation, the true $3$-dB SNR achievability regions are very well approximated by ellipses. Furthermore, both the finite-$M$ conic as its holographic approximation are quite close to the actual achievability region. The situation is somewhat different in Fig.~\ref{fig:DeltaLambda2_far}, which represents a situation where the intended receivers are further away from the ULA and closer to the holographic boundary of the region of beamfocusing feasibility (black dashed line). In this case, we observe that the actual $3$-dB SNR achievability regions (blue solid lines) are more complicated than just mere ellipses\footnote{Obviously, the shape of the achievability region becomes much closer to an ellipse if we consider higher values of $\kappa$ in (\ref{eq:quarticFiniteM}).}. However, they are still well-focused regions that are reasonably well approximated by ellipses near the intended receiver. Finally, Fig.~\ref{fig:DeltaLambda2_off} represents the situation where the intended receivers are located outside the holographic region for beamfocusing feasibility (black dashed lines). In this case, the actual $3$-dB SNR achievability regions are well approximated by hyperbolas and the ULA does no longer exhibit beamfocusing capabilities. In all the situations we can see that holographic approximations are quite accurate even for reasonably large values of the inter-element separation $\Delta_T$. 

\begin{figure}[t]
    \centering
    \includegraphics[width=\linewidth]{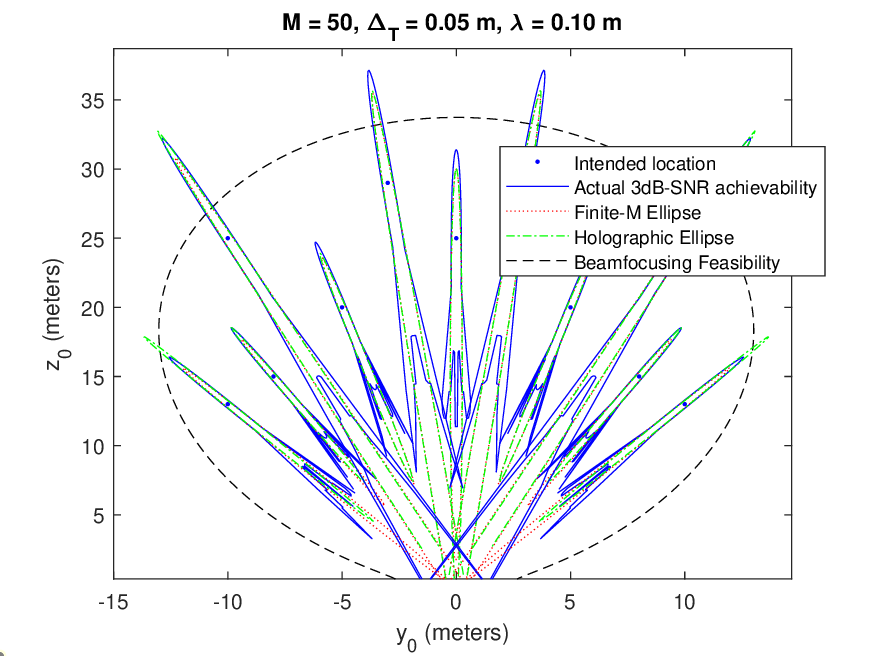}
    \caption{Representation of the region that achieves an SNR at least $3$dB below the optimum ($\kappa = 0.5$): true region (blue solid lines), finite-$M$ second order approximation (red dotted lines) and holographic approximation (green dash-dotted lines) for different intended locations (markers). }
    \label{fig:DeltaLambda2_far}
\end{figure}
\begin{figure}
    \centering
    \includegraphics[width=\linewidth]{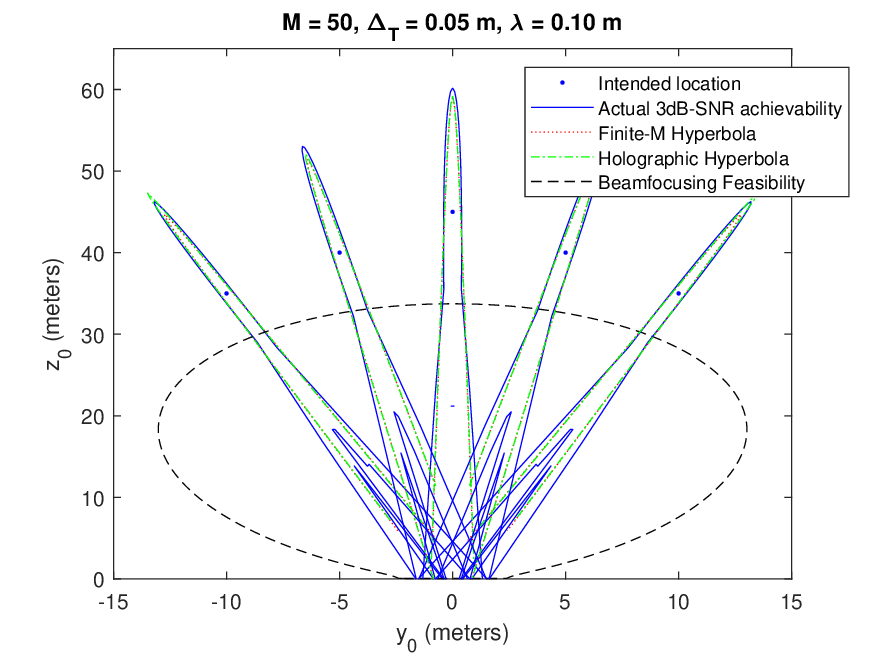}
    \caption{Representation of the region that achieves an SNR at least $3$dB below the optimum ($\kappa = 0.5$): true region (blue solid lines), finite-$M$ second order approximation (red dotted lines) and holographic approximation (green dash-dotted lines) for different intended locations (markers). }
    \label{fig:DeltaLambda2_off}
\end{figure}
\begin{figure}[ht]
    \centering
    \includegraphics[width=\linewidth]{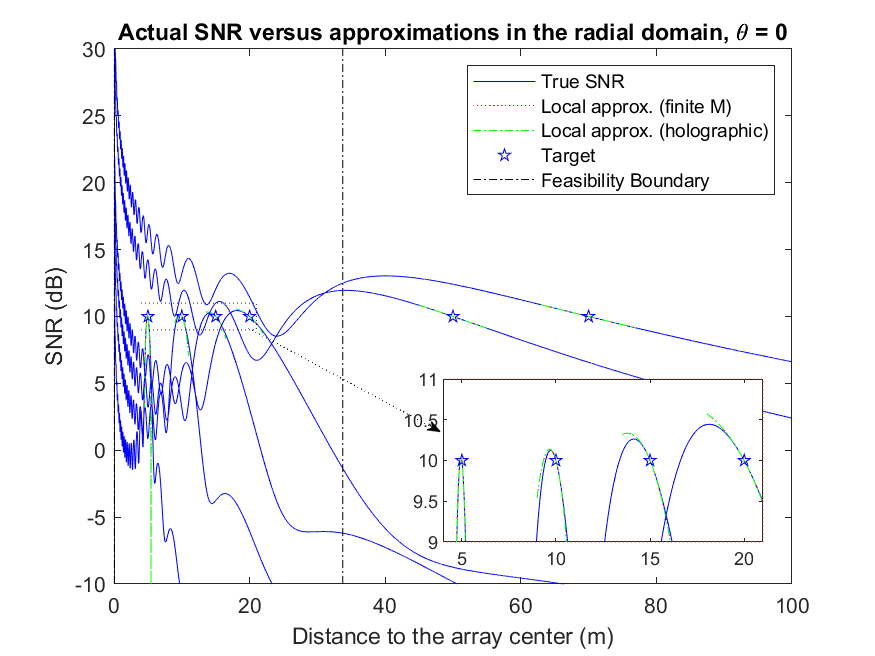}
    \caption{Actual achieved $\mathsf{SNR}$ for different targeted spatial points (markers) assuming $\mathsf{SNR}_0 =10$ dB and a receiver located at $\theta =0$ degrees. Local approximations are plotted in red dotted lines (finite $M$) and green dash-dotted lines (holographic). }
    \label{fig:beamfocusRadial0deg}
\end{figure}
\begin{figure}[ht]
    \centering
    \includegraphics[width=\linewidth]{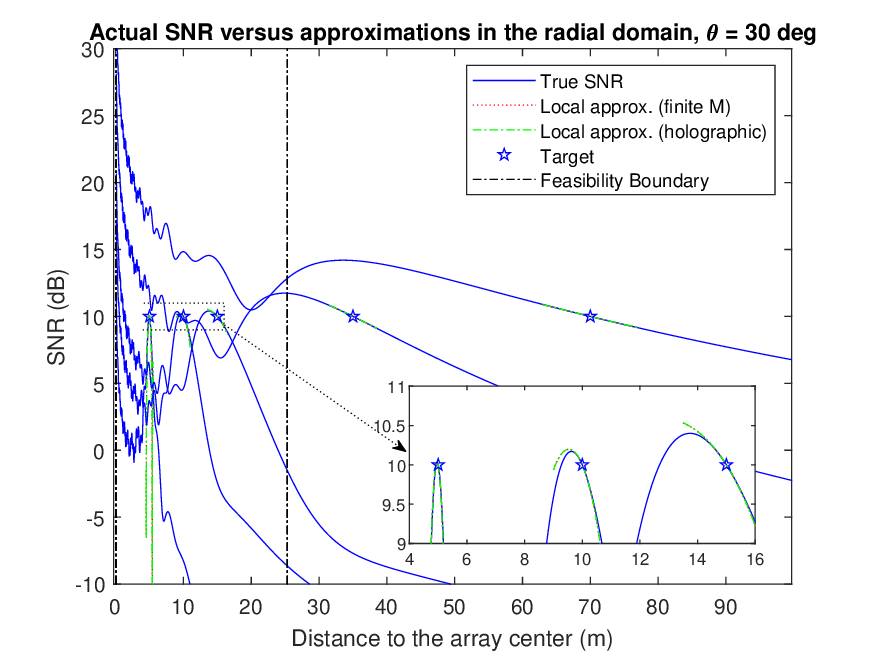}
    \caption{Actual achieved $\mathsf{SNR}$ for different targeted spatial points (markers) assuming $\mathsf{SNR}_0 =10$ dB and a receiver located at $\theta =30$ degrees. Local approximations are plotted in red dotted lines (finite $M$) and green dash-dotted lines (holographic). }
    \label{fig:beamfocusRadial30deg}
\end{figure}

Fig.~\ref{fig:beamfocusRadial0deg} and Fig.~\ref{fig:beamfocusRadial30deg} represent the actual achieved $\mathsf{SNR}$ when the transmitter is designed to guarantee a $\mathsf{SNR}_0 = 10$ dB at a certain radial location for an elevation equal to $\theta=0$ and $\theta =30$ degrees respectively. The intended locations are represented by markers and the local quadratic approximations for fixed $M$ and for the holographic regime are also shown in the same figure. Observe that the feasibility boundary can be used as the limit where beamfocusing can be safely used to multiplex signals in the radial domain. Beyond that point, the $\mathsf{SNR}$ loses its concavity and one cannot longer establish radial regions without mutual interference. 

\section{Conclusions}
The spatial region where beamfocusing is feasible by a ULA can be well characterized as the region where the SNR becomes locally concave. This means that the area close to the intended receiver that achieves a percentage of the maximum SNR can be asymptotically described as an ellipsoid. The holographic regime allows to characterize the beamfocusing feasibility region and provides some analytical insights into the shape and position of the coverage ellipsoids. According to the numerical analysis, the holographic description is quite accurate even in situations where the inter-element separation between consecutive elements of the ULA is moderately large. Furthermore, the asymptotic description is particularly useful in the region that is closer to the ULA, which corresponds to the area where beamfocusing is most effective. 
\appendices

\section{\label{sec:AppendixTaylor}Proof of Proposition \ref{prop:Taylor}}

We observe that we can express the complete $3\times3$ channel matrix
associated to the $m$th transmit antenna as $\mathbf{H}_{m}(\mathbf{r}
)=h_{m}(\mathbf{r})\mathbf{P}_{m}^{\perp}(\mathbf{r)}$ where $h_{m}
(\mathbf{r})$ is defined in (\ref{eq:defLowCaseh}). In this appendix, we will
use the short hand notation $\mathbf{r}_{m}=\mathbf{r}-\mathbf{p}_{m}$, where
we recall that $\mathbf{r}$ is the position vector of the receiver and
$\mathbf{p}_{m}$ is the position vector of the $m$th antenna. Using these
definitions, we can write $\mathbf{P}_{m}^{\perp}(\mathbf{r)}=\mathbf{I}
_{3}-\mathbf{r}_{m}\mathbf{r}_{m}^{H}/\left\Vert \mathbf{r}_{m}\right\Vert
^{2}$.

A Taylor expansion of the $i,j$th entry of the channel matrix (denoted as
$\left[  \mathbf{H}_{m}(\mathbf{r})\right]  _{i,j}$) can directly be expressed
as
\begin{multline*}
\left[  \mathbf{H}_{m}(\mathbf{r})\right]  _{i,j}   =\left[  \mathbf{H}
_{m}(\mathbf{r}_{0})\right]  _{i,j}+\left(  \mathbf{r}-\mathbf{r}_{0}\right)
^{T}\left[  \frac{d\left[  \mathbf{H}_{m}(\mathbf{r})\right]  _{i,j}
}{d\mathbf{r}}\right]  _{\mathbf{r=r}_{0}}\\
  +\frac{1}{2}\left(  \mathbf{r}-\mathbf{r}_{0}\right)  ^{T}\left[
\frac{d^{2}\left[  \mathbf{H}_{m}(\mathbf{r})\right]  _{i,j}}{d\mathbf{r}
d\mathbf{r}^{T}}\right]  _{\mathbf{r=r}_{0}}\left(  \mathbf{r}-\mathbf{r}
_{0}\right)  +\left[  \mathcal{R}_{m}(\mathbf{\bar{r}})\right]  _{i.j}
\end{multline*}
where $\mathcal{R}_{m}(\mathbf{\bar{r}})=O\left(  \left\Vert \mathbf{r}
-\mathbf{r}_{0}\right\Vert ^{3}\right)  $ is the reminder matrix and
$\mathbf{\bar{r}}$ is in the segment between $\mathbf{r}$ and $\mathbf{r}_{0}
$. Let us write $\mathbf{\Delta}_{r}=\left(  \mathbf{r}-\mathbf{r}_{0}\right)
$ and denote by $\mathcal{G}_{m}\left(  \mathbf{\Delta}_{r}\right)  $ and
$\mathcal{H}_{m}\left(  \mathbf{\Delta}_{r}\right)  $ the $3\times3$ matrices
defined as
\begin{align*}
\left[  \mathcal{G}_{m}\left(  \mathbf{\Delta}_{r}\right)  \right]  _{i,j}  &
=\left(  \mathbf{r}-\mathbf{r}_{0}\right)  ^{T}\left[  \frac{d\left[
\mathbf{H}_{m}(\mathbf{r})\right]  _{i,j}}{d\mathbf{r}}\right]  _{\mathbf{r=r}
_{0}}\\
\left[  \mathcal{H}_{m}\left(  \mathbf{\Delta}_{r}\right)  \right]  _{i,j}  &
=\frac{1}{2}\left(  \mathbf{r}-\mathbf{r}_{0}\right)  ^{T}\left[  \frac
{d^{2}\left[  \mathbf{H}_{m}(\mathbf{r})\right]  _{i,j}}{d\mathbf{r}
d\mathbf{r}^{T}}\right]  _{\mathbf{r=r}_{0}}\left(  \mathbf{r}-\mathbf{r}
_{0}\right)
\end{align*}
so that we can write $\mathbf{H}_{m}(\mathbf{r})=\mathbf{H}_{m}(\mathbf{r}
_{0})+\mathcal{G}_{m}\left(  \mathbf{\Delta}_{r}\right)  +\mathcal{H}
_{m}\left(  \mathbf{\Delta}_{r}\right)  +\mathcal{R}_{m}(\mathbf{\bar{r}})$
and where $\mathcal{G}_{m}\left(  \mathbf{\Delta}_{r}\right)  =O(\left\Vert
\mathbf{\Delta}_{r}\right\Vert )$ whereas $\mathcal{H}_{m}\left(
\mathbf{\Delta}_{r}\right)  =O(\left\Vert \mathbf{\Delta}_{r}\right\Vert
^{2})$ and $\mathcal{R}_{m}(\mathbf{\bar{r}})=O(\left\Vert \mathbf{\Delta}
_{r}\right\Vert ^{3})$. In (\ref{eq:TaylorChannel}), we define $\mathcal{G}
_{\mathrm{pol}}\left(  \mathbf{\Delta}_{r}\right)  $ and $\mathcal{H}
_{\mathrm{pol}}\left(  \mathbf{\Delta}_{r}\right)  $ from $\mathcal{G}
_{m}\left(  \mathbf{\Delta}_{r}\right)  $ and $\mathcal{H}_{m}\left(
\mathbf{\Delta}_{r}\right)  $ in the same way $\mathbf{H}_{\mathrm{pol}
}(\mathbf{r})$ is defined from $\mathbf{H}_{m}^{t_{\mathrm{pol}}\times
r_{\mathrm{pol}}}(\mathbf{r})$, and $\mathcal{P}_{\mathrm{pol}}=\mathcal{R}
_{\mathrm{pol}}/\left\Vert \mathbf{\Delta}_{r}\right\Vert ^{3}$ with
$\mathcal{R}_{\mathrm{pol}}$ built up from the reminder matrices
$\mathcal{R}_{m}(\mathbf{\bar{r}})$ in the same way. Next, we focus on the
characterization of these three matrices, namely $\mathcal{G}_{m}\left(
\mathbf{\Delta}_{r}\right)  $, $\mathcal{H}_{m}\left(  \mathbf{\Delta}
_{r}\right)  $ and $\mathcal{R}_{m}(\mathbf{\bar{r}})$. For the first two, we
provide closed form\ analytical expressions. For the reminder, we simply
reason how it can be uniformly bounded so that it is still of order
$O(\left\Vert \mathbf{\Delta}_{r}\right\Vert ^{3})$ in the holographic regime.
The following lemma provides some useful identities that will be used
throughout the derivations.

\begin{lemma}
\label{lem:Derivatives}Let $h_{m}(\mathbf{r})$ be defined as in
(\ref{eq:defLowCaseh}) and let $\mathbf{P}_{m}^{\perp}(\mathbf{r)}
=\mathbf{I}_{3}-\mathbf{P}_{m}(\mathbf{r)}$, $\mathbf{P}_{m}(\mathbf{r)=r}
_{m}\mathbf{r}_{m}^{H}/\left\Vert \mathbf{r}_{m}\right\Vert ^{2}$, where
$\mathbf{r}_{m}=\mathbf{r}-\mathbf{p}_{m}$. With some abuse of notation, let
$r_{k}$ denote the $k$th component of $\mathbf{r}$ and consider a
$3$-dimensional colum vector $\mathbf{a}$ with entries $a_{j}$. Then, we have
\begin{align}
\sum a_{k}\frac{\partial h_{m}(\mathbf{r})}{\partial r_{k}}  &  \mathbf{=}
\mathbf{-}\left(  1+\mathrm{j}\frac{2\pi}{\lambda}\left\Vert \mathbf{r}
_{m}\right\Vert \right)  \frac{\mathbf{a}^{T}\mathbf{r}_{m}}{\left\Vert
\mathbf{r}_{m}\right\Vert ^{2}}h_{m}(\mathbf{r})\label{eq:derh}\\
\sum a_{k}\frac{\partial}{\partial r_{k}}\mathbf{P}_{m}^{\perp}(\mathbf{r)}
&  \mathbf{=-}\frac{\mathbf{r}_{m}\mathbf{a}^{T}}{\left\Vert \mathbf{r}
_{m}\right\Vert ^{2}}\mathbf{P}_{m}^{\perp}(\mathbf{r)-P}_{m}^{\perp
}(\mathbf{r)}\frac{\mathbf{ar}_{m}^{T}}{\left\Vert \mathbf{r}_{m}\right\Vert
^{2}}\label{eq:derPort}\\
\sum a_{k}\frac{\partial\left\Vert \mathbf{r}_{m}\right\Vert }{\partial
r_{k}}  &  =\frac{\mathbf{a}^{T}\mathbf{r}_{m}}{\left\Vert \mathbf{r}
_{m}\right\Vert }\label{eq:derMod}\\
\sum a_{k}\frac{\partial}{\partial r_{k}}\frac{\mathbf{r}_{m}}{\left\Vert
\mathbf{r}_{m}\right\Vert ^{2}}  &  \mathbf{=}\left(  \mathbf{I}
_{3}-2\mathbf{P}_{m}(\mathbf{r)}\right)  \frac{\mathbf{a}}{\left\Vert
\mathbf{r}_{m}\right\Vert ^{2}}. \label{eq:derNormPos}
\end{align}

\end{lemma}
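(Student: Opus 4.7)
The proof is essentially a careful application of elementary multivariable calculus to the four functions in the statement. My plan is to dispatch the identities in the order in which each one depends on the previous.

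First I would establish \eqref{eq:derMod}, which is the most elementary: writing $\|\mathbf{r}_m\|^2=\sum_k (r_k-p_{m,k})^2$ and differentiating both sides gives $\partial\|\mathbf{r}_m\|/\partial r_k=[\mathbf{r}_m]_k/\|\mathbf{r}_m\|$, and contraction with $\mathbf{a}$ yields the claim. Then I would use this to prove \eqref{eq:derh}. Apply the product rule to the two $\mathbf{r}$-dependent factors of $h_m(\mathbf{r})$: the factor $\|\mathbf{r}_m\|^{-1}$ contributes $-[\mathbf{r}_m]_k/\|\mathbf{r}_m\|^3$ after using \eqref{eq:derMod}, while the exponential contributes $-\mathrm{j}(2\pi/\lambda)[\mathbf{r}_m]_k/\|\mathbf{r}_m\|$ times itself. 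Factoring $h_m(\mathbf{r})$ and contracting with $\mathbf{a}$ produces the announced $-(1+\mathrm{j}(2\pi/\lambda)\|\mathbf{r}_m\|)\mathbf{a}^T\mathbf{r}_m/\|\mathbf{r}_m\|^2$.

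Next I would prove \eqref{eq:derNormPos} by a direct componentwise computation: writing the $i$th entry of $\mathbf{r}_m/\|\mathbf{r}_m\|^2$ as $[\mathbf{r}_m]_i/\|\mathbf{r}_m\|^2$ and applying the quotient rule with \eqref{eq:derMod} gives
\begin{equation*}
\frac{\partial}{\partial r_k}\frac{[\mathbf{r}_m]_i}{\|\mathbf{r}_m\|^2}=\frac{\delta_{ik}}{\|\mathbf{r}_m\|^2}-\frac{2[\mathbf{r}_m]_i[\mathbf{r}_m]_k}{\|\mathbf{r}_m\|^4}.
\end{equation*}
Assembling in matrix form and contracting with $\mathbf{a}$ gives $\|\mathbf{r}_m\|^{-2}(\mathbf{I}_3-2\mathbf{P}_m(\mathbf{r}))\mathbf{a}$, as claimed.

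Finally, for \eqref{eq:derPort} I would write $\mathbf{P}_m^{\perp}(\mathbf{r})=\mathbf{I}_3-\mathbf{v}\mathbf{v}^T$ with $\mathbf{v}=\mathbf{r}_m/\|\mathbf{r}_m\|$, compute
\begin{equation*}
\sum_k a_k\frac{\partial \mathbf{v}}{\partial r_k}=\frac{\mathbf{a}}{\|\mathbf{r}_m\|}-\frac{\mathbf{r}_m(\mathbf{a}^T\mathbf{r}_m)}{\|\mathbf{r}_m\|^3}=\frac{\mathbf{P}_m^{\perp}(\mathbf{r})\mathbf{a}}{\|\mathbf{r}_m\|},
\end{equation*}
and then apply the product rule to $\mathbf{v}\mathbf{v}^T$. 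This yields $-\|\mathbf{r}_m\|^{-2}\bigl(\mathbf{P}_m^{\perp}(\mathbf{r})\mathbf{a}\,\mathbf{r}_m^T+\mathbf{r}_m\mathbf{a}^T\mathbf{P}_m^{\perp}(\mathbf{r})\bigr)$, which is exactly the stated right-hand side after using symmetry of $\mathbf{P}_m^{\perp}(\mathbf{r})$. There is no real obstacle anywhere — the only mildly subtle step is recognizing in the last identity that the two rank-one terms that emerge from differentiating $\mathbf{v}\mathbf{v}^T$ can be equivalently written with $\mathbf{P}_m^{\perp}(\mathbf{r})$ on the outside, which is immediate once one notes that the projector is symmetric and absorbs the normalization factor of $\mathbf{v}$.
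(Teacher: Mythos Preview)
Your proof is correct. The paper actually states Lemma~\ref{lem:Derivatives} without proof, treating the four identities as standard calculus facts to be used in the subsequent derivations; your argument supplies exactly the elementary chain-rule and quotient-rule computations that justify them, and there are no gaps.
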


\subsection{Computation of the gradient $\mathcal{G}_{m}\left(  \mathbf{\Delta
}_{r}\right)  $}

Recall that $\mathbf{H}_{m}(\mathbf{r})=h_{m}(\mathbf{r})\mathbf{P}_{m}
^{\perp}(\mathbf{r)}$. A direct application of (\ref{eq:derh}
)-(\ref{eq:derPort}) in Lemma \ref{lem:Derivatives} shows that, using the
notation of the lemma,
\begin{align}
\sum_{k}a_{k}\frac{\partial}{\partial r_{k}}\mathbf{H}_{m}(\mathbf{r})  &
=-\left(  1+\mathrm{j}\frac{2\pi}{\lambda}\left\Vert \mathbf{r}_{m}\right\Vert
\right)  \frac{\mathbf{a}^{T}\mathbf{r}_{m}}{\left\Vert \mathbf{r}
_{m}\right\Vert ^{2}}\mathbf{H}_{m}(\mathbf{r})\label{eq:generalGradient}\\
&  -\frac{\mathbf{r}_{m}\mathbf{a}^{T}}{\left\Vert \mathbf{r}_{m}\right\Vert
^{2}}\mathbf{H}_{m}(\mathbf{r})-\mathbf{H}_{m}(\mathbf{r})\frac{\mathbf{ar}
_{m}^{T}}{\left\Vert \mathbf{r}_{m}\right\Vert ^{2}}.\nonumber
\end{align}
The closed form expression of the gradient matrix $\mathcal{G}_{m}\left(
\mathbf{\Delta}_{r}\right)  $ in (\ref{eq:defGm(DeltaT)}) can be obtained by particularizing the
above expression to $\mathbf{r=r}_{0}$ and $\mathbf{a}=\Delta_{r}
=\mathbf{r}-\mathbf{r}_{0}$, namely
\begin{align*}
\mathcal{G}_{m}\left(  \mathbf{\Delta}_{r}\right)   &  =-\left(
1+\mathrm{j}\frac{2\pi}{\lambda}\left\Vert \mathbf{r}_{m,0}\right\Vert
\right)  \frac{\Delta_{r}^{T}\mathbf{r}_{m,0}}{\left\Vert \mathbf{r}
_{m,0}\right\Vert ^{2}}\mathbf{H}_{m}(\mathbf{r}_{0})\\
&  -\frac{\mathbf{r}_{m,0}\Delta_{r}^{T}}{\left\Vert \mathbf{r}_{m,0}
\right\Vert ^{2}}\mathbf{H}_{m}(\mathbf{r}_{0})-\mathbf{H}_{m}(\mathbf{r}
_{0})\frac{\Delta_{r}\mathbf{r}_{m,0}^{T}}{\left\Vert \mathbf{r}
_{m,0}\right\Vert ^{2}}
\end{align*}
where now $\mathbf{r}_{m,0}=\mathbf{r}_{0}-\mathbf{p}_{m}$.

\subsection{Computation of the Hessian $\mathcal{H}_{m}\left(  \mathbf{\Delta
}_{r}\right)  $}

Starting from the expression in (\ref{eq:generalGradient}) and using the
identities in Lemma \ref{lem:Derivatives} we can directly obtain
\begin{multline}
\sum_{i,j}a_{i}b_{j}\frac{\partial^{2}\mathbf{H}_{m}(\mathbf{r})}{\partial
r_{i}\partial r_{j}}=\frac{h_{m}(\mathbf{r})}{\left\Vert \mathbf{r}
_{m}\right\Vert ^{2}} \Bigg[  \mathbf{\Xi}_{1}+\left(  1+\mathrm{j}\frac{2\pi
}{\lambda}\left\Vert \mathbf{r}_{m}\right\Vert \right)  \mathbf{\Xi}
_{2}+ \\+\left(  1+\mathrm{j}\frac{2\pi}{\lambda}\left\Vert \mathbf{r}
_{m}\right\Vert \right)  ^{2}\mathbf{\Xi}_{3}\Bigg]
\label{eq:secondOrderDer}
\end{multline}
where the matrices $\mathbf{\Xi}_{1}$, $\mathbf{\Xi}_{2}$ and $\mathbf{\Xi
}_{2}$ take the form
\begin{align*}
\mathbf{\Xi}_{1}  &  =-\mathbf{P}_{m}^{\perp}(\mathbf{r})\left(
\mathbf{ba}^{T}+\mathbf{ab}^{T}\right)  \mathbf{P}_{m}^{\perp}(\mathbf{r})\\
&  +\mathbf{P}_{m}(\mathbf{r})\left(  \mathbf{ba}^{T}+\mathbf{ab}^{T}\right)
\mathbf{P}_{m}^{\perp}(\mathbf{r})+\mathbf{P}_{m}^{\perp}(\mathbf{r})\left(
\mathbf{ab}^{T}+\mathbf{ba}^{T}\right)  \mathbf{P}_{m}(\mathbf{r})\\
&  +\left(  \mathbf{b}^{T}\mathbf{P}_{m}(\mathbf{r})\mathbf{a}\right)
\mathbf{P}_{m}^{\perp}(\mathbf{r})+2\left(  \mathbf{a}^{T}\mathbf{P}
_{m}^{\perp}(\mathbf{r})\mathbf{b}\right)  \mathbf{P}_{m}(\mathbf{r})\\
\mathbf{\Xi}_{2}  &  =\mathbf{P}_{m}(\mathbf{r})\left(  \mathbf{ab}
^{T}+\mathbf{ba}^{T}\right)  \mathbf{P}_{m}^{\perp}(\mathbf{r})+\mathbf{P}
_{m}^{\perp}(\mathbf{r})\left(  \mathbf{ba}^{T}+\mathbf{a\mathbf{b}}
^{T}\right)  \mathbf{P}_{m}(\mathbf{r})\\
&  -\mathbf{b}^{T}\mathbf{P}_{m}^{\perp}(\mathbf{r})\mathbf{aP}_{m}^{\perp
}(\mathbf{r})\\
\mathbf{\Xi}_{3}  &  =\mathbf{a}^{T}\mathbf{P}_{m}(\mathbf{r})\mathbf{bP}
_{m}^{\perp}(\mathbf{r}).
\end{align*}
Hence, the expression for $\mathcal{H}_{m}\left(  \mathbf{\Delta}_{r}\right)
$ can directly be obtained by particularizing the above expression to
$\mathbf{r=r}_{0}$ and $\mathbf{a=b=\Delta}_{r}$, so that
\begin{multline*}
\mathcal{H}_{m}\left(  \mathbf{\Delta}_{r}\right)  =\frac{h_{m}(\mathbf{r}
_{0})}{\left\Vert \mathbf{r}_{m,0}\right\Vert ^{2}} \Bigg[  \mathbf{\Xi}
_{1,0}+\left(  1+\mathrm{j}\frac{2\pi}{\lambda}\left\Vert \mathbf{r}
_{m,0}\right\Vert \right)  \mathbf{\Xi}_{2,0} \\ +\left(  1+\mathrm{j}\frac{2\pi
}{\lambda}\left\Vert \mathbf{r}_{m,0}\right\Vert \right)  ^{2}\mathbf{\Xi
}_{3,0}\Bigg]    
\end{multline*}
where we recall that $\mathbf{r}_{m,0}=\mathbf{r}_{0}-\mathbf{p}_{m}$ and
where the three matrices $\mathbf{\Xi}_{1,0}$, $\mathbf{\Xi}_{2,0}$,
$\mathbf{\Xi}_{3,0}$ take the form
\begin{align*}
\mathbf{\Xi}_{1,0}  &  =-2\mathbf{P}_{m}^{\perp}(\mathbf{r}_{0})\mathbf{\Delta
}_{r}\mathbf{\Delta}_{r}^{T}\mathbf{P}_{m}^{\perp}(\mathbf{r}_{0})\\
&  +2\mathbf{P}_{m}(\mathbf{r}_{0})\mathbf{\Delta}_{r}\mathbf{\Delta}_{r}
^{T}\mathbf{P}_{m}^{\perp}(\mathbf{r}_{0})+2\mathbf{P}_{m}^{\perp}
(\mathbf{r}_{0})\mathbf{\Delta}_{r}\mathbf{\Delta}_{r}^{T}\mathbf{P}
_{m}(\mathbf{r}_{0})\\
&  +\left(  \mathbf{\Delta}_{r}^{T}\mathbf{P}_{m}(\mathbf{r}_{0}
)\mathbf{\Delta}_{r}\right)  \mathbf{P}_{m}^{\perp}(\mathbf{r}_{0})+2\left(
\mathbf{\Delta}_{r}^{T}\mathbf{P}_{m}^{\perp}(\mathbf{r}_{0})\mathbf{\Delta
}_{r}\right)  \mathbf{P}_{m}(\mathbf{r}_{0})\\
\mathbf{\Xi}_{2,0}  &  =2\mathbf{P}_{m}(\mathbf{r}_{0})\mathbf{\Delta}
_{r}\mathbf{\Delta}_{r}^{T}\mathbf{P}_{m}^{\perp}(\mathbf{r}_{0}
)+2\mathbf{P}_{m}^{\perp}(\mathbf{r}_{0})\mathbf{\Delta}_{r}\mathbf{\Delta
}_{r}^{T}\mathbf{P}_{m}(\mathbf{r}_{0})\\
&  -\left(  \mathbf{\Delta}_{r}^{T}\mathbf{P}_{m}^{\perp}(\mathbf{r}
_{0})\mathbf{\Delta}_{r}\right)  \mathbf{P}_{m}^{\perp}(\mathbf{r}_{0})\\
\mathbf{\Xi}_{3,0}  &  =\left(  \mathbf{\Delta}_{r}^{T}\mathbf{P}
_{m}(\mathbf{r}_{0})\mathbf{\Delta}_{r}\right)  \mathbf{P}_{m}^{\perp
}(\mathbf{r}_{0}).
\end{align*}
Once again, we can use the above expression to obtain a closed form analytical
description of the product $\mathcal{H}_{\mathrm{pol}}\left(  \mathbf{\Delta
}_{r}\right)  \mathbf{H}_{\mathrm{pol}}^{H}\left(  \mathbf{r}_{0}\right)  $.
Indeed, using again the polarization selection matrix $\mathbf{E}
_{\mathrm{pol}}$ defined above, we can write

\subsection{Analysis of the reminder $\mathcal{R}_{m}(\mathbf{\bar{r}})$}

The objective of this section is to show that the spectral norm of the
reminder term $\mathcal{R}_{m}(\mathbf{\bar{r}})$ can be bounded by a quantity
of the type $\mathrm{P}\left(  \left\Vert \mathbf{\bar{r}-p}_{m}\right\Vert
^{-1}\right)  \left\Vert \mathbf{\Delta}_{r}\right\Vert ^{3}$, where
$\mathrm{P}\left(  \text{\textperiodcentered}\right)  $ is a polynomial with
positive coefficients independent of the geometry of the problem (a nice polynomial, see Remark~\ref{rem:niceConstantsPols}. In this
appendix, the actual value of $\mathrm{P}\left(  \text{\textperiodcentered
}\right)  $ may vary from one line to the next. First of all, a direct
application of (\ref{eq:derh})-(\ref{eq:derMod}) shows that
\[
\sum c_{k}\frac{\partial}{\partial r_{k}}\frac{h_{m}(\mathbf{r})}{\left\Vert
\mathbf{r}_{m}\right\Vert ^{2}}=\mathbf{-}\left(  3+\frac{2\pi\mathrm{j}
}{\lambda}\left\Vert \mathbf{r}_{m}\right\Vert \right)  \frac{\mathbf{c}
^{T}\mathbf{r}_{m}}{\left\Vert \mathbf{r}_{m}\right\Vert ^{4}}h_{m}
(\mathbf{r}).
\]
Consequently, applying the chain rule for derivatives in
(\ref{eq:secondOrderDer}) we find
\begin{align}
& \sum_{i,j,k}a_{i}b_{j}c_{k}\frac{\partial^{3}\mathbf{H}_{m}(\mathbf{r}
)}{\partial r_{i}\partial r_{j}\partial r_{k}}\label{eq:3rdDer} \\ &  =\mathbf{-}\left(
3+\frac{2\pi\mathrm{j}}{\lambda}\left\Vert \mathbf{r}_{m}\right\Vert \right)
\frac{\mathbf{c}^{T}\mathbf{r}_{m}}{\left\Vert \mathbf{r}_{m}\right\Vert ^{2}
}\sum_{i,j}a_{i}b_{j}\frac{\partial^{2}\mathbf{H}_{m}(\mathbf{r})}{\partial
r_{i}\partial r_{j}} \nonumber \\
&  +\frac{h_{m}(\mathbf{r})}{\left\Vert \mathbf{r}_{m}\right\Vert ^{2}}\left[
\mathbf{\tilde{\Xi}}_{1}+\left(  1+\frac{2\pi\mathrm{j}}{\lambda}\left\Vert
\mathbf{r}_{m}\right\Vert \right)  \mathbf{\tilde{\Xi}}_{2}+\left(
1+\frac{2\pi\mathrm{j}}{\lambda}\left\Vert \mathbf{r}_{m}\right\Vert \right)
^{2}\mathbf{\tilde{\Xi}}_{3}\right] \nonumber\\
&  +\mathrm{j}\frac{2\pi}{\lambda}\frac{h_{m}(\mathbf{r})\mathbf{c}
^{T}\mathbf{r}_{m}}{\left\Vert \mathbf{r}_{m}\right\Vert ^{3}}\left[
\mathbf{\Xi}_{2}+2\left(  1+\frac{2\pi\mathrm{j}}{\lambda}\left\Vert
\mathbf{r}_{m}\right\Vert \right)  \mathbf{\Xi}_{3}\right] \nonumber
\end{align}
where we have defined $\mathbf{\tilde{\Xi}}_{j}=\sum c_{k}\partial\mathbf{\Xi
}_{j}/\partial r_{k}$. Now, from the expression of $\mathbf{\Xi}_{1}$,
$\mathbf{\Xi}_{2}$, $\mathbf{\Xi}_{3}$ we can easily reason that $\left\Vert
\mathbf{\Xi}_{1}\right\Vert \leq9\left\Vert \mathbf{a}\right\Vert \left\Vert
\mathbf{b}\right\Vert $, $\left\Vert \mathbf{\Xi}_{2}\right\Vert
\leq5\left\Vert \mathbf{a}\right\Vert \left\Vert \mathbf{b}\right\Vert $ and
$\left\Vert \mathbf{\Xi}_{3}\right\Vert \leq\left\Vert \mathbf{a}\right\Vert
\left\Vert \mathbf{b}\right\Vert $. On the other hand, we have $\left\vert
h_{m}(\mathbf{r}_{0})\right\vert =\left\vert \xi/\lambda\right\vert
/\left\Vert \mathbf{r}_{m}\right\Vert $, which implies by the triangular
inequality that
\begin{multline*}
\left\Vert \sum_{i,j}a_{i}b_{j}\frac{\partial^{2}\mathbf{H}_{m}(\mathbf{r}
)}{\partial r_{i}\partial r_{j}}\right\Vert  \leq \\
\leq\frac{\left\vert \xi
/\lambda\right\vert }{\left\Vert \mathbf{r}_{m}\right\Vert ^{3}}\left[
15+\frac{14\pi}{\lambda}\left\Vert \mathbf{r}_{m}\right\Vert +\left(
\frac{2\pi}{\lambda}\right)  ^{2}\left\Vert \mathbf{r}_{m}\right\Vert
^{2}\right]  \left\Vert \mathbf{a}\right\Vert \left\Vert \mathbf{b}\right\Vert
\end{multline*}
whcih can be expressed as ${\left\Vert \mathbf{a}\right\Vert \left\Vert \mathbf{b}\right\Vert}/{\left\Vert \mathbf{r}_{m}\right\Vert } \mathrm{P} (  {\left\Vert \mathbf{r}_{m}\right\Vert ^{-1} }) $.
This shows that, since $\left\vert \mathbf{c}^{T}\mathbf{r}_{m}\right\vert
\leq\left\Vert \mathbf{c}\right\Vert \left\Vert \mathbf{r}_{m}\right\Vert $ by
Cauchy-Schwarz inequality, the modulus of the first term on the right hand
side of (\ref{eq:3rdDer}) is upper bounded by $\left\Vert \mathbf{a}
\right\Vert \left\Vert \mathbf{b}\right\Vert \left\Vert \mathbf{c}\right\Vert
\left\Vert \mathbf{r}_{m}\right\Vert ^{-1}\mathrm{P}(\left\Vert \mathbf{r}
_{m}\right\Vert ^{-1})$. A similar reasoning shows that the modulus of the
third term on the right hand side of (\ref{eq:3rdDer}) is upper bounded by $\left\Vert \mathbf{a}\right\Vert \left\Vert
\mathbf{b}\right\Vert \left\Vert \mathbf{c}\right\Vert$ times
\[
\frac{2\pi}{\lambda}\frac{\left\vert \xi/\lambda\right\vert }{\left\Vert
\mathbf{r}_{m}\right\Vert ^{3}}\left(  7+\frac{4\pi}{\lambda}\left\Vert
\mathbf{r}_{m}\right\Vert \right)   =\frac{1}{\left\Vert \mathbf{r}_{m}\right\Vert ^{2}}
\mathrm{P}\left( {\left\Vert \mathbf{r}_{m}\right\Vert ^{-1}}\right)  .
\]
It remains to bound the spectral norm of the matrices $\mathbf{\tilde{\Xi}
}_{j}$, $j=1,2,3$. We have from (\ref{eq:derPort}) and the Cauchy-Schwarz
inequality that $\left\Vert \sum c_{k}\partial\mathbf{P}_{m}(\mathbf{r)/}
\partial r_{k}\right\Vert \leq2\left\Vert \mathbf{c}\right\Vert \left\Vert
\mathbf{r}_{m}\right\Vert ^{-1}$. Therefore, by a direct application of the
chain rule we find that $\left\Vert \mathbf{\tilde{\Xi}}_{j}\right\Vert \leq
k_{j}\left\Vert \mathbf{a}\right\Vert \left\Vert \mathbf{b}\right\Vert
\left\Vert \mathbf{c}\right\Vert /\left\Vert \mathbf{r}_{m}\right\Vert $ where
$k_{1}=24$, $k_{2}=12$ and $k_{3}=4$. This shows that the second term on the
right hand side of (\ref{eq:3rdDer}) is upper bounded by $\left\Vert \mathbf{a}\right\Vert \left\Vert
\mathbf{b}\right\Vert \left\Vert \mathbf{c}\right\Vert$ times
\[
\frac{\left\vert \xi/\lambda\right\vert  }{\left\Vert
\mathbf{r}_{m}\right\Vert ^{4}}\left[  36+\frac{28\pi}{\lambda}\left\Vert
\mathbf{r}_{m}\right\Vert +\left(  \frac{2\pi}{\lambda}\right)  ^{2}\left\Vert
\mathbf{r}_{m}\right\Vert ^{2}\right]  =\frac{\mathrm{P}\left(  {\left\Vert
\mathbf{r}_{m}\right\Vert } ^{-1}\right)}{\left\Vert
\mathbf{r}_{m}\right\Vert ^{2}}  .
\]
With all this, we can conclude that
\[
\left\Vert \sum_{i,j,k}a_{i}b_{j}c_{k}\frac{\partial^{3}\mathbf{H}
_{m}(\mathbf{r})}{\partial r_{i}\partial r_{j}\partial r_{k}}\right\Vert
\leq\left\Vert \mathbf{a}\right\Vert \left\Vert \mathbf{b}\right\Vert
\left\Vert \mathbf{c}\right\Vert \mathrm{P}\left(  \frac{1}{\left\Vert
\mathbf{r}_{m}\right\Vert }\right)
\]
as we wanted to show.

\section{\label{sec:AppendixTaylorSNR}Proof of Corollary \ref{cor:TaylorSNR}}

We first observe that we can expand the product $\mathbf{H}_{m}\left(  \mathbf{r}\right)  \mathbf{H}_{m}^{H}\left(
\mathbf{r}\right) $ as
\begin{multline*}
\mathbf{H}_{m}\left(  \mathbf{r}\right)  \mathbf{H}_{m}^{H}\left(
\mathbf{r}\right)  =\mathbf{H}_{m}\left(  \mathbf{r}_{0}\right)
\mathbf{H}_{m}^{H}\left(  \mathbf{r}_{0}\right)  +\mathbf{H}_{m}\left(
\mathbf{r}_{0}\right)  \mathcal{G}_{m}^{H}\left(  \mathbf{\Delta}_{r}\right)
\\ +\mathcal{G}_{m}\left(  \mathbf{\Delta}_{r}\right)  \mathbf{H}_{m}^{H}\left(
\mathbf{r}_{0}\right) 
+\mathcal{G}_{m}\left(  \mathbf{\Delta}_{r}\right)  \mathcal{G}_{m}^{H}\left(
\mathbf{\Delta}_{r}\right) \\ +\mathcal{H}_{m}\left(  \mathbf{\Delta}
_{r}\right)  \mathbf{H}_{m}^{H}\left(  \mathbf{r}_{0}\right)  +\mathbf{H}
_{m}\left(  \mathbf{r}_{0}\right)  \mathcal{H}_{m}^{H}\left(  \mathbf{\Delta
}_{r}\right)  +\Xi_{m}
\end{multline*}
where we have introduced the error matrix 
\begin{align*}
\Xi_{m}  &  =\mathcal{G}_{m}\left(  \mathbf{\Delta}_{r}\right)  \mathcal{H}
_{m}^{H}\left(  \mathbf{\Delta}_{r}\right)  +\mathcal{H}_{m}\left(
\mathbf{\Delta}_{r}\right)  \mathcal{G}_{m}^{H}\left(  \mathbf{\Delta}
_{r}\right)   \\
&  +\mathcal{R}_{m}(\mathbf{\bar{r},\Delta}_{r})\left(  \mathbf{H}_{m}
^{H}\left(  \mathbf{r}_{0}\right)  +\mathcal{G}_{m}^{H}\left(  \mathbf{\Delta
}_{r}\right)  +\mathcal{H}_{m}^{H}\left(  \mathbf{\Delta}_{r}\right)  \right)
\\
&  +\left(  \mathbf{H}_{m}\left(  \mathbf{r}_{0}\right)  +\mathcal{G}
_{m}\left(  \mathbf{\Delta}_{r}\right)  +\mathcal{H}_{m}\left(  \mathbf{\Delta
}_{r}\right)  \right)  \mathcal{R}_{m}^{H}(\mathbf{\bar{r},\Delta}
_{r})\\
& +\mathcal{H}_{m}\left(  \mathbf{\Delta}_{r}\right)
\mathcal{H}_{m}^{H}\left(  \mathbf{\Delta}_{r}\right) +\mathcal{R}_{m}(\mathbf{\bar{r},\Delta}_{r})\mathcal{R}_{m}
^{H}(\mathbf{\bar{r},\Delta}_{r}).
\end{align*}
We only need to show that $\left\Vert \Xi_{m}\right\Vert <K\left\Vert
\mathbf{\Delta}_{r}\right\Vert ^{3}$ where $K$ is a nice constant. We recall
here that we can write $\left\Vert \mathbf{H}_{m}\left(  \mathbf{r}
_{0}\right)  \right\Vert \leq\mathrm{P}(\left\Vert \mathbf{r}_{m,0}\right\Vert
^{-1})$, $\left\Vert \mathcal{G}_{m}\left(  \mathbf{\Delta}_{r}\right)
\right\Vert \leq\mathrm{P}(\left\Vert \mathbf{r}_{m,0}\right\Vert
^{-1})\left\Vert \mathbf{\Delta}_{r}\right\Vert $,$\left\Vert \mathcal{H}
_{m}\left(  \mathbf{\Delta}_{r}\right)  \right\Vert \leq\mathrm{P}(\left\Vert
\mathbf{r}_{m,0}\right\Vert ^{-1})\left\Vert \mathbf{\Delta}_{r}\right\Vert
^{2}$ and $\mathcal{R}_{m}(\mathbf{\bar{r},\Delta}_{r})\leq\mathrm{P}
(\left\Vert \mathbf{\bar{r}}_{m,0}\right\Vert ^{-1})\left\Vert \mathbf{\Delta
}_{r}\right\Vert ^{2}$. Hence, using the fact that $\left\Vert \mathbf{A+B}
\right\Vert \leq\left\Vert \mathbf{A}\right\Vert +\left\Vert \mathbf{B}
\right\Vert $ and $\left\Vert \mathbf{AB}\right\Vert \leq\left\Vert
\mathbf{A}\right\Vert \left\Vert \mathbf{B}\right\Vert $ we can readily see
that $\left\Vert \Xi_{m}\right\Vert /\left\Vert \mathbf{\Delta}_{r}\right\Vert ^{3}$ is upper bounded by a term of the form
\[
\mathrm{P}(\left\Vert \mathbf{r}
_{m,0}\right\Vert ^{-1})+\mathrm{P}(\left\Vert \mathbf{r}_{m,0}\right\Vert
^{-1})\mathrm{P}(\left\Vert \mathbf{\bar{r}}_{m,0}\right\Vert ^{-1}
)+\mathrm{P}(\left\Vert \mathbf{\bar{r}}_{m,0}\right\Vert ^{-1})
\]
where we recall that the nice polynomials $\mathrm{P}($\textperiodcentered$)$
may take on a different value at each appearance. Now observe that $\left\Vert
\mathbf{r}_{m,0}\right\Vert >z_{0}$, $\left\Vert \mathbf{\bar{r}}
_{m,0}\right\Vert >\min\{z,z_0\}$ and so that $\left\Vert \Xi_{m}\right\Vert
\leq K\left\Vert \mathbf{\Delta}_{r}\right\Vert ^{3}$ where $K$ is a nice constant.

\section{Proof of Proposition \ref{prop:quadric}} \label{sec:AppendixQuadric}
It is well known that when $x_{0}=0$ the eigenvector associated to the maximum
eigenvalue of the matrix
\[
\frac{1}{2M+1}\mathbf{H}_{\mathrm{pol}}\left(  \mathbf{r}_{0}\right)
\mathbf{H}_{\mathrm{pol}}^{H}\left(  \mathbf{r}_{0}\right)
\]
is equal to the first column of a $3\times3$ identity matrix, denoted by $\mathbf{u}=\mathbf{e}_{1}$. This was proven in
\cite{mestre25ojsp} for $t_{\mathrm{pol}}=3$ and $t_{\mathrm{pol}}=2$ (the result for
$t_{\mathrm{pol}}=1$ is trivial since the above matrix has zeros everywhere
except for the $(1,1)$th entry). Using the fact that $\mathbf{u=e}_{1}$ we see
that only the first column of the two error matrices $\mathcal{G}
_{\mathrm{pol}}\left(  \Delta_{r}\right)  \mathbf{H}_{\mathrm{pol}}^{H}\left(
\mathbf{r}_{0}\right)  /\left(  2M+1\right)  $ and $\mathcal{H}_{\mathrm{pol}
}\left(  \Delta_{r}\right)  \mathbf{H}_{\mathrm{pol}}^{H}\left(
\mathbf{r}_{0}\right)  /\left(  2M+1\right)  $ come into play, which leads to
a considerable simplification of the above expression. To provide a closed
form expression for the matrix $\mathcal{G}_{\mathrm{pol}}\left(  \Delta
_{r}\right)  \mathbf{H}_{\mathrm{pol}}^{H}\left(  \mathbf{r}_{0}\right)  $,
let $\mathbf{E}_{\mathrm{pol}}$ denote a $3\times3$ diagonal selection matrix
that contains ones at the $t_{\mathrm{pol}}$ first diagonal entries and zeros
elsewhere. Using the expression of $\mathcal{G}_{m}\left(  \Delta_{r}\right)
$ in (\ref{eq:defGm(DeltaT)}) we can express
\begin{multline}
\mathcal{G}_{\mathrm{pol}}\left(  \Delta_{r}\right)  \mathbf{H}_{\mathrm{pol}
}^{H}\left(  \mathbf{r}_{0}\right)  =\sum_{m=-M}^{M}\mathcal{G}_{m}\left(
\mathbf{\Delta}_{r}\right)  \mathbf{E}_{\mathrm{pol}}\mathbf{H}_{m}^{H}\left(
\mathbf{r}_{0}\right) \\  =-\left\vert \frac{\xi}{\lambda}\right\vert ^{2}\sum_{m=-M}
^{M}\Bigg[ \frac{\mathbf{r}_{m,0}\Delta_{r}^{T}\mathbf{P}_{m}^{\perp}(\mathbf{r}
_{0})+\mathbf{P}_{m}^{\perp}(\mathbf{r}_{0})\Delta_{r}\mathbf{r}_{m,0}^{T}
}{\left\Vert \mathbf{r}_{m,0}\right\Vert ^{4}}\times \\ \times\mathbf{E}_{\mathrm{pol}
}\mathbf{P}_{m}^{\perp}(\mathbf{r}_{0})\label{eq:productGpolHpol}\\
+ \frac{\mathbf{\Delta
}_{r}^{T}\mathbf{r}_{m,0}}{\left\Vert \mathbf{r}_{m,0}\right\Vert ^{4}}\left(
1+\mathrm{j}\frac{2\pi}{\lambda}\left\Vert \mathbf{r}_{m,0}\right\Vert
\right)  \mathbf{P}_{m}^{\perp}(\mathbf{r}_{0})\mathbf{E}_{\mathrm{pol}
}\mathbf{P}_{m}^{\perp}(\mathbf{r}_{0}) \Bigg]
\end{multline}
where we have used the fact that $\left\vert h_{m}(\mathbf{r}_{0})\right\vert
^{2}=\left\vert \xi/\lambda\right\vert ^{2}/\left\Vert \mathbf{r}
_{m,0}\right\Vert ^{2}$ and we recall that $\mathbf{r}_{m,0}=\mathbf{r}
_{0}-\mathbf{p}_{m}$. Regarding the expression for $\mathcal{H}_{\mathrm{pol}
}\left(  \Delta_{r}\right)  \mathbf{H}_{\mathrm{pol}}^{H}\left(
\mathbf{r}_{0}\right)  $, we can similarly use the expression of
$\mathcal{H}_{m}\left(  \Delta_{r}\right)  $ in (\ref{eq:defHm(DeltaT)}) to
see that
\begin{multline}
\mathcal{H}_{\mathrm{pol}}\left(  \mathbf{\Delta}_{r}\right)  \mathbf{H}
_{\mathrm{pol}}^{H}\left(  \mathbf{r}_{0}\right)    =\sum_{m=-M}
^{M}\mathcal{H}_{m}\left(  \mathbf{\Delta}_{r}\right)  \mathbf{E}
_{\mathrm{pol}}\mathbf{H}_{m}^{H}\left(  \mathbf{r}_{0}\right) \\
  =\sum_{m=-M}^{M}\frac{\left\vert \xi/\lambda\right\vert ^{2}}{\left\Vert
\mathbf{r}_{m,0}\right\Vert ^{4}}\Bigg[  \mathbf{\Xi}_{1,0}+\left(
1+\mathrm{j}\frac{2\pi}{\lambda}\left\Vert \mathbf{r}_{m,0}\right\Vert
\right)  \mathbf{\Xi}_{2,0} \\ +\left(  1+\mathrm{j}\frac{2\pi}{\lambda}\left\Vert
\mathbf{r}_{m,0}\right\Vert \right)  ^{2}\mathbf{\Xi}_{3,0}\Bigg]
\mathbf{E}_{\mathrm{pol}}\mathbf{P}_{m}^{\perp}(\mathbf{r}_{0}).\label{eq:productHesspolHpol}
\end{multline}
where the matrices $\mathbf{\Xi}_{j,0}$ are defined in the statement of
Proposition \ref{prop:Taylor}. To further simplify these expressions, 
we use the fact that $\mathbf{P}_{m}^{\mathbf{\perp}}(\mathbf{r}
_{0})\mathbf{e}_{1}=\mathbf{e}_{1}$ and $\mathbf{P}_{m}(\mathbf{r}
_{0})\mathbf{e}_{1}=0$. Furthermore, $\mathbf{E}_{\mathrm{pol}}\mathbf{e}
_{1}=\mathbf{e}_{1}$\ because the polarization along the $x$-axis is assumed
to be always employed. These properties allow to simplify the expression in
(\ref{eq:productGpolHpol}) into
\begin{equation}
\frac{1}{2M+1}\mathcal{G}_{\mathrm{pol}}\left(  \Delta_{r}\right)
\mathbf{H}_{\mathrm{pol}}^{H}\left(  \mathbf{r}_{0}\right)  \mathbf{u}
=\left\vert \xi/\lambda\right\vert ^{2}\mathbb{G}\mathbf{\Delta}_{r}
\label{eq:simplifiedGradient}
\end{equation}
where
\[
\mathbb{G}=\left[
\begin{array}
[c]{ccc}
0 & \bar{s}_{M}^{(3)}+\mathrm{j}\frac{2\pi}{\lambda}\bar{s}_{M}^{(2)} &
-s_{M}^{(4)}z_{0}-\mathrm{j}\frac{2\pi}{\lambda}s_{M}^{(3)}z_{0}\\
\bar{s}_{M}^{(3)} & 0 & 0\\
-z_{0}s_{M}^{(4)} & 0 & 0
\end{array}
\right]
\]
Likewise, using the expression in (\ref{eq:productHesspolHpol}) we immediately
obtain
\begin{align*}
& \mathcal{H}_{\mathrm{pol}}\left(  \mathbf{\Delta}_{r}\right)  \mathbf{H}
_{\mathrm{pol}}^{H}\left(  \mathbf{r}_{0}\right)  \mathbf{u} \\ &  =-\Delta
_{x}\left\vert \frac{\xi}{\lambda}\right\vert ^{2}\sum_{m=-M}^{M}\frac{1}{\left\Vert
\mathbf{r_{0}-p}_{m}\right\Vert ^{4}} \mathbf{P}_{m}^{\perp}  \mathbf{\Delta}_{r}\\
&+ \Delta
_{x}\left\vert \frac{\xi}{\lambda}\right\vert ^{2}\sum_{m=-M}^{M}\frac{1}{\left\Vert
\mathbf{r_{0}-p}_{m}\right\Vert ^{4}} \left(  2+\mathrm{j}\frac{2\pi}{\lambda}\left\Vert
\mathbf{r_{0}-p}_{m}\right\Vert \right)  \mathbf{P}_{m}(\mathbf{r}
_{0})  \mathbf{\Delta}_{r}\\
&  +\frac{\left\vert \xi/\lambda\right\vert ^{2}}{2}\sum_{m=-M}^{M}\left[
1+\left(  1+\mathrm{j}\frac{2\pi}{\lambda}\left\Vert \mathbf{r_{0}-p}
_{m}\right\Vert \right)  ^{2}\right] \times \\ 
& \times \frac{\mathbf{\Delta}_{r}^{T}
\mathbf{P}_{m}(\mathbf{r}_{0})\mathbf{\Delta}_{r}}{\left\Vert \mathbf{r_{0}
-p}_{m}\right\Vert ^{4}}\mathbf{u}\\
&  -\frac{\left\vert \xi/\lambda\right\vert ^{2}}{2}\sum_{m=-M}^{M}\left(
1+\mathrm{j}\frac{2\pi}{\lambda}\left\Vert \mathbf{r_{0}-p}_{m}\right\Vert
\right)  \frac{\mathbf{\Delta}_{r}^{T}\mathbf{P}_{m}^{\perp}(\mathbf{r}
_{0})\mathbf{\Delta}_{r}}{\left\Vert \mathbf{r_{0}-p}_{m}\right\Vert ^{4}
}\mathbf{u.}
\end{align*}
At this point, we can use the expression of $\mathbf{P}_{m}^{\mathbf{\perp}
}(\mathbf{r}_{0})$ so that
\begin{multline}
\frac{1}{2M+1}\mathcal{H}_{\mathrm{pol}}\left(  \mathbf{\Delta}_{r}\right)
\mathbf{H}_{\mathrm{pol}}^{H}\left(  \mathbf{r}_{0}\right)  \mathbf{u}  = \\
=\Delta_{x}\left\vert \xi/\lambda\right\vert ^{2}\mathbb{H}_{1}\mathbf{\Delta
}_{r}+\frac{\left\vert \xi/\lambda\right\vert ^{2}}{2}\mathbf{\Delta}_{r}
^{T}\mathbb{H}_{2}\mathbf{\Delta}_{r}\mathbf{u} \label{eq:simplifiedHessian}
\end{multline}
where we have introduced the two matrices
\begin{align*}
\mathbb{H}_{1}  &  =\left[
\begin{array}
[c]{ccc}
-s_{M}^{(4)} & 0 & 0\\
0 & 2s_{M}^{(4)}-3z_{0}^{2}s_{M}^{(6)} & -3\bar{s}_{M}^{(5)}z_{0}\\
0 & -3\bar{s}_{M}^{(5)}z_{0} & -s_{M}^{(4)}+3z_{0}^{2}s_{M}^{(6)}
\end{array}
\right] \\ & +\frac{2\pi\mathrm{j}}{\lambda}\left[
\begin{array}
[c]{ccc}
0 & 0 & 0\\
0 & s_{M}^{(3)}-z_{0}^{2}s_{M}^{(5)} & -\bar{s}_{M}^{(4)}z_{0}\\
0 & -\bar{s}_{M}^{(4)}z_{0} & z_{0}^{2}s_{M}^{(5)}
\end{array}
\right] \\
\mathbb{H}_{2}  &  =\left[
\begin{array}
[c]{ccc}
-s_{M}^{(4)} & 0 & 0\\
0 & 2s_{M}^{(4)}-3z_{0}^{2}s_{M}^{(6)} & -3\bar{s}_{M}
^{(5)}z_{0}\\
0 & -3z_{0}\bar{s}_{M}^{(5)} & 3s_{M}^{(6)}z_{0}^{2}-s_{M}^{(4)}
\end{array}
\right]
\\ &+ \left(\frac{2\pi}{\lambda}\right)^2 \left[
\begin{array}
[c]{ccc}
0 & 0 & 0\\
0 &  z_{0}^{2}s_{M}^{(4)} - s_{M}^{(2)}  & \bar{s}_{M}^{(3)}z_{0}\\
0 & \bar{s}_{M}^{(3)}z_{0} & -z_{0}^{2}s_{M}^{(4)}
\end{array} \right] \\
&  +\frac{2\pi\mathrm{j}}{\lambda}\left[
\begin{array}
[c]{ccc}
-s_{M}^{(3)} & 0 & 0\\
0 & 2s_{M}^{(3)}-3z_{0}^{2}s_{M}^{(5)} & -3\bar{s}_{M}^{(4)}z_{0}\\
0 & -3\bar{s}_{M}^{(4)}z_{0} & 3z_{0}^{2}s_{M}^{(5)}-s_{M}^{(3)}
\end{array}
\right]  .
\end{align*}
Now, using (\ref{eq:simplifiedGradient}) and (\ref{eq:simplifiedHessian}) into
(\ref{eq:TaylorSNRsimple}), we can conclude that
\begin{align*}
\mathsf{SNR}(\mathbf{r})  &  =\mathsf{SNR}(\mathbf{r}_{0}) +\mathsf{SNR}
(\mathbf{r}_{0})\frac{1}{s_{M}^{(2)}}\left(  \mathbf{\Delta}_{r}^{T}
\mathbb{G}^{H}\mathbf{u}+\mathbf{u}^{H}\mathbb{G}\mathbf{\Delta}_{r}\right) \\
&  +\mathsf{SNR}(\mathbf{r}_{0})\frac{1}{s_{M}^{(2)}}  \Delta_{x}\left(
\mathbf{\Delta}_{r}^{T}\mathbb{H}_{1}^{H}\mathbf{u}+\mathbf{u}^{H}
\mathbb{H}_{1}\mathbf{\Delta}_{r}\right) \\
&  +\mathsf{SNR}(\mathbf{r}_{0})\mathbf{\Delta}_{r}^{T} \left[ 
\frac{\mathbb{G}^{H}\mathbb{G}}{( s_{M}^{(2)})  ^{2}} + 
\frac{\mathbb{H}_{2}+\mathbb{H}_{2}^{H}}{2 s_{M}^{(2)}} 
\right]\mathbf{\Delta}_{r}
+\epsilon_{M}
\end{align*}
where we have used the fact that $\lambda_{\max}=\left\vert \xi/\lambda
\right\vert ^{2}s_{M}^{(2)}$ and $\mathsf{SNR}(\mathbf{r}_{0})=\bar{P}
\lambda_{\max}/\sigma^{2}$. Using the expression of $\mathbb{G}$,
$\mathbb{H}_{1}$ and $\mathbb{H}_{2}$ we can compactly express the above
expression as in (\ref{eq:SNRtaylorx0=0}) where
\begin{align*}
\mathfrak{m}_{M}  &  =-\frac{1}{s_{M}^{(2)}}\operatorname{Re}\left[
\mathbb{G}\right]  \mathbf{u}\\
\mathcal{M}_{M}  &  =-\frac{1}{\left(  s_{M}^{(2)}\right)  ^{2}}
\mathrm{Re}\mathbb{G}^{H}\mathbb{G}-\frac{1}{2s_{M}^{(2)}}\left(  \mathbb{H}_{2}+\mathbb{H}_{2}^{H}\right)
\\&- \frac{1}{s_{M}^{(2)}}\left(
\mathbb{H}_{1}^{H}\mathbf{uu}^{H}+\mathbf{uu}^{H}\mathbb{H}_{1}\right)
\end{align*}
which correspond to the expressions given in the statement of the proposition.
The only remaining point is to prove that $\gamma_{M}^{(1)}\geq0$. To see
this, we can use the Cauchy-Schwarz inequality to show that
\begin{align}
s_{M}^{(2k)}s_{M}^{(2l)}  &  \geq\left(  s_{M}^{(k+l)}\right)  ^{2}
\label{eq:ineqss1}\\
s_{M}^{2k}\left(  s_{M}^{2(l-1)}-z_{0}^{2}s_{M}^{(2l)}\right)   &  \geq\left(
\bar{s}_{M}^{k+l-1}\right)  ^{2} \label{eq:ineqss2}
\end{align}
for any $k,l\in\mathbb{N}$. To see that $\gamma_{M}^{(1)}>0$, use
(\ref{eq:ineqss1})-(\ref{eq:ineqss2}) with $k=1$, $l=3$ so that $s_{M}
^{(2)}s_{M}^{(6)}\geq(s_{M}^{(4)})^{2}$ and $s_{M}^{(2)}(s_{M}^{(4)}-z_{0}
^{2}s_{M}^{(6)})\geq(\bar{s}_{M}^{(3)})^{2}$. Applying these two inequalities
consecutively, we find $\gamma_{M}^{(1)}=(3s_{M}^{(2)}s_{M}^{(4)}-(\bar{s}
_{M}^{(3)})^{2}-z_{0}^{2}(s_{M}^{(4)})^{2})/(s_{M}^{(2)})^{2}\geq(3s_{M}
^{(2)}s_{M}^{(4)}-(\bar{s}_{M}^{(3)})^{2}-z_{0}^{2}s_{M}^{(2)}s_{M}
^{(6)})/(s_{M}^{(2)})^{2}\geq2(s_{M}^{(2)}s_{M}^{(4)})/(s_{M}^{(2)})^{2}>0$.



\ifCLASSOPTIONcaptionsoff
  \newpage
\fi



\bibliographystyle{IEEEtran}
\bibliography{IEEEabrv,bibtex/bib/holographic}  

\end{document}